\numberwithin{equation}{section}
\newtheorem{theorem}{Theorem}[section]
\newtheorem{corollary}{Corollary}[section]
\newtheorem{proposition}{Proposition}[section]
\newtheorem{lemma}{Lemma}[section]
\newtheorem{definition}{Definition}[section]
\newtheorem{remark}{Remark}[section]
\newcommand{\avg}[1]{\langle#1\rangle}
\newcommand{\Avg}[1]{\left\langle#1\right\rangle}
\newcommand{\bk}[1]{\left(#1\right)}
\newcommand{\Bk}[1]{\left[#1\right]}
\newcommand{\BK}[1]{\left\{#1\right\}}
\newcommand{\mc}[1]{\mathcal #1}
\newcommand{\mb}[1]{\mathbb #1}
\newcommand{\Hc}[1]{\mathrm{H.c.}}
\newcommand{\HS}{\mathrm{HS}}
\newcommand{\Con}{\mathrm{Con}}
\newcommand{\Petz}{\mathrm{Petz}}
\renewcommand{\exp}{\operatorname{exp}}
\newcommand{\umanita}{Umanit{\`a}}
\newcommand{\cgraphic}[2]{\centerline{\includegraphics[width=#1\textwidth]{#2}}}
\newcommand{\fig}[3]{
\begin{figure}[htbp!]
\cgraphic{#1}{#2}
\caption{\label{#2}#3}
\end{figure}
}
\begin{document}

\title{Quantum reversal: a general theory of coherent quantum absorbers}

\author{Mankei Tsang}
\email{mankei@nus.edu.sg}
\homepage{https://blog.nus.edu.sg/mankei/}
\affiliation{Department of Electrical and Computer Engineering,
  National University of Singapore, 4 Engineering Drive 3, Singapore
  117583}
\affiliation{Department of Physics, National University of Singapore,
  2 Science Drive 3, Singapore 117551}
\orcid{0000-0001-7173-1239}
%\date{\today}

\begin{abstract}
  The fascinating concept of coherent quantum absorber---which can
  absorb any photon emitted by another system while maintaining
  entanglement with that system---has found diverse implications in
  open quantum system theory and quantum metrology. This work
  generalizes the concept by proposing the so-called reversal
  conditions for the two systems, in which a ``reverser'' coherently
  reverses any effect of the other system on a field. The reversal
  conditions are rigorously boiled down to concise formulas involving
  the Petz recovery map and Kraus operators, thereby generalizing as
  well as streamlining the existing treatments of coherent absorbers.
\end{abstract}

\maketitle

\section{\label{sec_intro}Introduction}
A coherent quantum absorber, as conceived by Stannigel \emph{et
  al.}~\cite{stannigel12}, is a system that absorbs any light emitted
by another system while maintaining entanglement with that system. The
idea has since found surprising implications, such as a technique for
finding the steady states of open quantum systems
\cite{stannigel12,roberts20,roberts21} and a method for constructing
efficient continuous measurements for quantum parameter estimation
\cite{godley23,yang23}. Remarkably, Yang, Huelga, and Plenio
\cite{yang23} have recently shown that a method of
measurement-backaction-noise cancellation
\cite{hammerer,qnc,qmfs,khalili18}---which has seen
significant experimental progress with atomic and optomechanical
systems in recent years \cite{moeller17,junker22,jia23}---can be
regarded as a special case of coherent absorbers. This relation
extends the potential impact of the absorber concept to the areas of
magnetometry, optomechanics, and gravitational-wave detectors.

While the absorber concept is fascinating and promising, many special
assumptions and a bewildering array of theoretical tools have been
invoked to study it, such as quantum Markov semigroups
\cite{gardiner_zoller}, cascaded quantum networks \cite{stannigel12},
matrix-product states \cite{yang23}, and quantum detailed balance
\cite{roberts21,fagnola10}. The goal of this work is to
tease out the essential ideas and make the absorber notion more
rigorous as well as generalizable to other scenarios, far beyond the
narrow setting of photon emission and absorption considered in prior
works.

This generality calls for a different name for the generalized
absorber conditions proposed here---I call them the reversal
conditions. Whereas previous works
  \cite{stannigel12,roberts20,roberts21,yang23} assume a
  continuous-time Markov model that requires the interactions between
  the systems and the field to be infinitesimally weak
  \cite{gardiner_zoller}, the reversal conditions here allow the field
  to have arbitrary interaction with the first system before
  interacting with the second. Although Ref.~\cite{godley23} offers a
similar generalization, an appeal of the reversal conditions here is
that they are boiled down to concise but equivalent formulas in
terms of judiciously chosen concepts in open quantum system theory.
Out go the quantum Markov semigroups, the matrix-product states, and
many other extraneous concepts; in come antiunitary operators
\cite{parth_qsc,roberts21}, the Petz recovery map
\cite{petz84,petz,wilde}, and Kraus operators \cite{holevo19} as the
more fundamental ingredients of the reversal conditions.

The involvement of the Petz recovery map here is surprising.  Since
its conception as a quantum generalization of the conditional
expectation \cite{accardi82,petz84}, the Petz map has garnered
increasing attention in diverse areas, including quantum foundations
\cite{leifer13,parzygnat23a}; quantum information theory in the
context of channel sufficiency \cite{petz86,petz}, error correction
\cite{barnum02}, and entropy inequalities \cite{wilde}; quantum
statistical physics in the context of time reversal, detailed balance
\cite{crooks08,fagnola10,duvenhage15,tsang24}, and fluctuation
theorems \cite{manzano15,alhambra18,aberg18,buscemi21}; and even
quantum gravity in the context of entanglement wedge reconstruction
\cite{cotler19,chen20}. The appearance of the Petz map here may well
be a mathematical coincidence, or it may signal a deeper connection
between the reversal concept put forth and the other problems
involving the map.

This work is organized as follows. Sec.~\ref{sec_reversal} introduces
the basic model and defines the reversal condition.  Sec.~\ref{sec_DB}
discusses how a quantum detailed balance condition proposed by Fagnola
and \umanita\ \cite{fagnola10} can simplify the reversal
condition. Sec.~\ref{sec_special} introduces a special reversal
condition that eliminates an ambiguity in the general condition and
may be more useful in experiment design. Theorems~\ref{thm_reversal}
and \ref{thm_reversal2} are the key results that translate the
conditions to the promised formulas. Sec.~\ref{sec_exa} presents two
examples to illustrate
Theorem~\ref{thm_reversal2}---Sec.~\ref{sec_semigroup}, in particular,
discusses a continuous-time Markov model to relate the results here to
the literature on coherent quantum absorbers
\cite{stannigel12,roberts20,roberts21,yang23}.

\section{\label{sec_reversal}Reversal condition}
\subsection{\label{sec_model}Model}
Consider the model depicted by Fig.~\ref{absorber_circuit4}.  Two
systems, denoted as A and B, are initially in a pure state
$\ket{\psi}_{AB}$ in Hilbert space $\mc H_A\otimes\mc H_B$.  A
temporal mode of a traveling field with initial state $\ket{\chi}_E$
in Hilbert space $\mc H_E$ interacts with system A according to a
unitary operator $U$ on $\mc H_A \otimes \mc H_E$.  The field mode
then evolves according to an intermediate unitary operator $U_E$ on
$\mc H_E$, before interacting with system B according to a unitary
operator $V$ on $\mc H_B \otimes \mc H_E$. Note that the system on
$\mc H_E$ is called a temporal field mode only to provide a more
physical picture of the model; it can be an arbitrary system in
reality. All Hilbert spaces are assumed to be finite-dimensional in
the proofs for simplicity.

\fig{0.6}{absorber_circuit4}{A quantum circuit illustrating the
  interactions between system A, system B, and the field mode.}

Let $\mc O(\mc H)$ be the set of operators on Hilbert space $\mc H$.
An operator on operators is called a map in this work, also called a
superoperator in the literature. Given $\ket{\chi}_E$ and $U$, a
channel for system A can be modeled by a completely positive,
trace-preserving (CPTP) map $\mc F:\mc O(\mc H_A) \to \mc O(\mc H_A)$
given by
\begin{align}
\Aboxed{\mc F\rho_A &\equiv 
\trace_E\Bk{U (\rho_A\otimes\ket{\chi}_E\bra{\chi}) U^\dagger},}
\label{F}
\end{align}
where $\rho_A$ is an arbitrary density operator on $\mc H_A$,
$\trace_{xy\dots}$ is the partial trace with respect to
$\mc H_x \otimes\mc H_y\otimes\dots$, and subscripts $A$, $B$, and $E$
are used throughout this work to clarify the subspace to which each
expression belongs.  Suppose that a density operator
$\sigma \in \mc O(\mc H_A)$ is a steady state of $\mc F$, viz.,
\begin{align}
\Aboxed{\mc F \sigma &= \sigma.}
\label{steady}
\end{align}
The spectral form of $\sigma$ is assumed to be
\begin{align}
\sigma &= \sum_n p_n\ket{n}_A\bra{n},
\label{sigma}
\end{align}
where $\{p_n:n = 0,\dots,d_A-1\}$ are the eigenvalues of $\sigma$,
$\{\ket{n}_A \in \mc H_A: n = 0,\dots,d_A-1\}$ is an orthonormal
basis of $\mc H_A$, $d_x$ is the dimension of $\mc H_x$, and each
$\ket{n}_A$ is an eigenvector of $\sigma$ with eigenvalue $p_n$.
$\sigma$ is assumed to be full-rank, viz., $p_n > 0$ for all $n$.

Following Roberts~\emph{et al.}~\cite{roberts21}, assume that
$d_B = d_A$ (hereafter abbreviated as $d$) and $\ket{\psi}_{AB}$
is a purification of the steady state $\sigma$, given by
\begin{align}
\ket{\psi}_{AB} &= \sum_n \sqrt{p_n} \ket{n}_A \otimes \ket{\tilde n}_B,
\label{psi}
\end{align}
where 
\begin{align}
\ket{\tilde n}_B &\equiv W \theta \ket{n}_A,
\label{ntilde}
\end{align}
$W:\mc H_A \to \mc H_B$ is a unitary operator, and
$\theta:\mc H_A\to\mc H_A$ is an antiunitary operator. 
The concept of antiunitary operators is reviewed in
Appendix~\ref{app_anti}.

\begin{remark}\label{rem_conj}
  An antiunitary operator can always be decomposed as $u\vartheta$,
  where $u$ is a unitary operator and $\vartheta$ is a conjugation
  (antiunitary and $\vartheta^2 = I_A$, where $I_{xy\dots}$ is the
  identity operator on $\mc H_x\otimes\mc H_y\otimes\dots$).  As $W$
  is left unspecified throughout this work, $u$ may be absorbed into
  the definition of $W$, and $\theta$ in Eq.~(\ref{ntilde}) can be
  taken as a conjugation without loss of generality.
\end{remark}

\subsection{Definition of the reversal condition}

\begin{definition}[Reversal condition]
\label{def_reversal}
Assuming the model in Sec.~\ref{sec_model}, the whole system is said
to obey the reversal condition if there exists a unitary $U_E$ on
$\mc H_E$ such that
\begin{align}
  \Aboxed{(I_A \otimes V)  (I_{AB} \otimes U_E)
(U\otimes I_B) \ket{\psi}_{AB}\otimes\ket{\chi}_E
  &= \ket{\psi}_{AB}\otimes\ket{\tilde\chi}_E,}
\label{V}
\end{align}
where $\ket{\tilde\chi}_E \in \mc H_E$ is the final state of the field
mode.
\end{definition}
Physically, the reversal condition means that system B, as a
``reverser,'' can undo the entanglement between system A and the field
mode, with the help of an intermediate $U_E$. An equivalent picture is
to reverse the arrow of time depicted in Fig.~\ref{absorber_circuit4}
and rewrite Eq.~(\ref{V}) as
\begin{align}
\ket{\psi}_{AB}\otimes\ket{\chi}_E &=
(U^\dagger\otimes I_B)(I_{AB}\otimes U_E^\dagger)(I_A\otimes V^\dagger)
\ket{\psi}_{AB}\otimes\ket{\tilde\chi}_E.
\end{align}
Under this reversed arrow of time,
$\ket{\psi}_{AB}\otimes\ket{\tilde\chi}_E$ is the initial state,
$\ket{\psi}_{AB}\otimes\ket{\chi}_E$ is the final state, and system B
is a reverser \emph{in advance} that interacts with the field mode
first and prevents the entanglement between system A and the field
mode. Either way, system A is always assumed to be given throughout
this work, while system B is to be designed as the reverser.

The allowance of an intermediate $U_E$ is a key property that makes
the reversal condition different from all previously proposed absorber
conditions \cite{stannigel12,roberts20,roberts21,yang23,godley23}, which all
assume $U_E \propto I_E$. The reason for introducing $U_E$ is to make
the reversal condition general enough to be boiled down to a simple
formula in terms of CPTP maps, as shown in Theorem~\ref{thm_reversal}
later.

The original absorber condition proposed by Stannigel \emph{et al.}\
\cite{stannigel12} assumes that both $\ket{\chi}_E$ and
$\ket{\tilde\chi}_E$ are the vacuum state of a bosonic mode. System B
must then absorb any emission by system A---hence the name
absorber. Refs.~\cite{stannigel12,roberts20,roberts21,yang23} also
assume that the system-field interactions are infinitesimally weak so
that a continuous-time Markov model becomes valid; see
Sec.~\ref{sec_semigroup} later for details.  Def.~\ref{def_reversal},
on the other hand, makes no such assumptions and allows the field mode
to have arbitrary dynamics and arbitrary interaction with system A
before interacting with system B. As the condition in
Def.~\ref{def_reversal} is much more general and no longer restricted
to the absorber setting, it is appropriate to give it a different
name.

It can be shown that, given the model in Sec.~\ref{sec_model}, a $V$
that satisfies the reversal condition always exists
\cite[Lemma~4.1]{godley23}; see also Prop.~\ref{prop_godley} later.

After the interactions with the field mode, systems A and B are
assumed, as in previously proposed absorber conditions, to return to
the initial state $\ket{\psi}_{AB}$.  In other words,
$\ket{\psi}_{AB}$ is their steady state after sequential interactions
with multiple field modes, each with the same initial state
$\ket{\chi}_E$. If systems A and B are not initially in this steady
state, they may still converge to it after many rounds of
interactions, such that the assumption of $\ket{\psi}_{AB}$ as the
initial state becomes valid. The precise condition for the
steady-state convergence is, however, outside the scope of this work.

To proceed further, it is vital to define a CPTP map
$\mc G:\mc O(\mc H_B)\to\mc O(\mc H_B)$ as
\begin{align}
\Aboxed{\mc G \rho_B &\equiv 
\trace_E\Bk{V^\dagger (\rho_B\otimes\ket{\tilde\chi}_E\bra{\tilde\chi}) V}.}
\label{G}
\end{align}
Notice the placements of $V^\dagger$ and $V$. Under the arrow of time
depicted in Fig.~\ref{absorber_circuit4}, $\mc G$ is not a physical
channel, but it is a physical channel for system B if the arrow of
time is reversed, so that $V^\dagger$ is the Schr\"odinger-picture
unitary and $\ket{\tilde\chi}_E$ is the ancilla input state. This
definition allows one to express the reversal condition in terms of
the $\mc F$ and $\mc G$ maps, as shown by Theorem~\ref{thm_reversal}
later.

\subsection{A precise formula for the reversal condition}

I now work towards Theorem~\ref{thm_reversal} by presenting a series
of lemmas and definitions.
\begin{lemma}
\label{lem_FG}
The reversal condition is satisfied if and only if the $\mc F$
and $\mc G$ maps defined by Eqs.~(\ref{F}) and (\ref{G}) obey
\begin{align}
(\mc F\otimes \mc I_B)\ket{\psi}_{AB}\bra{\psi} &= 
(\mc I_A\otimes \mc G)\ket{\psi}_{AB}\bra{\psi},
\label{FG}
\end{align}
where $\mc I_{xy\dots}$ is the identity map on
$\mc O(\mc H_x\otimes\mc H_y\otimes\dots)$.
\end{lemma}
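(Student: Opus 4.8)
The plan is to prove both directions by relating the global state-vector identity in Def.~\ref{def_reversal} to the Choi-type matrix identity \eqref{FG}. The key observation is that the state $\ket{\psi}_{AB}$ is a full-rank purification of $\sigma$, so the map $\rho_A \mapsto (\mc I_A \otimes \mathcal{J})\ket{\psi}_{AB}\bra{\psi}$, obtained by acting with a channel $\mathcal{J}$ on the $B$-side, is injective on $\mc O(\mc H_A)$ and its image determines $\mathcal{J}$ on the support of $\sigma$ (here all of $\mc H_A$). So \eqref{FG} is really a compact way of encoding the equality of two channels after a suitable transpose/transformation through $\ket{\psi}_{AB}$; I would make this precise but not belabor it.

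**Forward direction ($\Rightarrow$).**

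First I would start from \eqref{V} and trace out the field mode $E$. Writing $\Phi \equiv (I_A\otimes V)(I_{AB}\otimes U_E)(U\otimes I_B)$, the reversal condition says $\Phi(\ket{\psi}_{AB}\otimes\ket{\chi}_E) = \ket{\psi}_{AB}\otimes\ket{\tilde\chi}_E$. Taking $\trace_E$ of the rank-one projector onto each side gives
\begin{align}
\trace_E\Bk{\Phi(\ket{\psi}_{AB}\bra{\psi}\otimes\ket{\chi}_E\bra{\chi})\Phi^\dagger} = \ket{\psi}_{AB}\bra{\psi}.
\end{align}
Now I would compute the left-hand side two ways. Since $U_E$ and $V$ act trivially or unitarily on $E$ and $V$ couples only $B$ and $E$, one can push the partial trace through: $\trace_E$ is invariant under the unitary $I_{AB}\otimes U_E$, and the $V$ on the outside, being supported on $\mc H_B\otimes\mc H_E$, combines with $\trace_E$ to realize exactly the map $\mathcal{G}$ applied on the $B$-register with ancilla state $\ket{\tilde\chi}_E$ — but one must be careful about which state plays the ancilla role. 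The cleaner route: first apply only $U\otimes I_B$ and trace out $E$, which by definition \eqref{F} yields $(\mc F\otimes\mc I_B)\ket{\psi}_{AB}\bra{\psi}$ when $U_E$ and $V$ are absent; to reinstate them, note that $(I_A\otimes V)(I_{AB}\otimes U_E)$ is a unitary on $\mc H_A\otimes\mc H_B\otimes\mc H_E$ that does not touch $\mc H_A$, so $\trace_E$ of its conjugation acting on any state equals $\trace_E$ of that state with the $AB$-part possibly changed — this is precisely where $\mathcal{G}^{-1}$-type reasoning enters. I would instead argue symmetrically: apply $V^\dagger$ to the final identity, i.e.\ start from $(I_{AB}\otimes U_E)(U\otimes I_B)\ket{\psi}\otimes\ket{\chi} = (I_A\otimes V^\dagger)\ket{\psi}\otimes\ket{\tilde\chi}$, trace out $E$ on both sides. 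The left side gives $(\mc F\otimes\mc I_B)\ket{\psi}_{AB}\bra{\psi}$ (the $U_E$ on $E$ drops under $\trace_E$); the right side gives $(\mc I_A\otimes\mc G)\ket{\psi}_{AB}\bra{\psi}$ by definition \eqref{G}. That is exactly \eqref{FG}.

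**Reverse direction ($\Leftarrow$).**

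Conversely, suppose \eqref{FG} holds. I would pick any Stinespring/purification of $\mc F$ realized by $U$ and $\ket{\chi}_E$ (given), and any purification of $\mc G$; \eqref{FG} says the two resulting vectors in $\mc H_A\otimes\mc H_B\otimes\mc H_E$ (extended suitably) have the same reduced density operator on $\mc H_A\otimes\mc H_B$. By the unitary equivalence of purifications (Uhlmann), there exists a unitary on the purifying space $\mc H_E$ — this is the $U_E$ — intertwining them, which rearranges to \eqref{V} with $\ket{\tilde\chi}_E$ determined by the $\mc G$-dilation. The only subtlety is dimension-matching the two $E$-factors (both are $\mc H_E$ here by construction since $V$ acts on $\mc H_B\otimes\mc H_E$), and ensuring the freedom in choosing $U_E$ is genuinely a unitary and not merely a partial isometry — finite-dimensionality and equal dimensions make this automatic.

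**Main obstacle.**

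The hard part is the bookkeeping of the partial traces and the precise role of $\ket{\tilde\chi}_E$ versus $\ket{\chi}_E$ in the two dilations: one must make sure that the same $\mc H_E$ and the same vectors appear on both sides so that Uhlmann's theorem delivers a unitary $U_E$ rather than an isometry into a larger space. Getting the adjoints on $V$ in the definition of $\mc G$ to line up with the time-reversed reading of \eqref{V} is the crux; once that is set up correctly, both implications are short.
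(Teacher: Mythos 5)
Your proposal is correct and follows essentially the same route as the paper: the forward direction applies $V^\dagger$ to both sides of Eq.~(\ref{V}) and traces out $E$ (with $U_E$ dropping under $\trace_E$), and the reverse direction invokes the unitary equivalence of purifications on the common purifying space $\mc H_E$ to recover $U_E$. The paper simply packages this by naming the two purifications $\ket{\phi_1}$ and $\ket{\phi_2}$ and citing Ref.~\cite[Theorem~3.11]{holevo19} for the purification step; the detours in your forward-direction discussion are unnecessary once you settle on the symmetric argument.
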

\begin{proof}
Define
\begin{align}
\ket{\phi_1} &\equiv (U\otimes I_B) \ket{\psi}_{AB}\otimes\ket{\chi}_E,
\label{pur1}
\\
\ket{\phi_2} &\equiv (I_A \otimes V^\dagger) \ket{\psi}_{AB}\otimes
\ket{\tilde\chi}_E,
\label{pur2}
\end{align}
so that
\begin{align}
(\mc F\otimes \mc I_B)\ket{\psi}_{AB}\bra{\psi}  &= 
\trace_E\bk{\ket{\phi_1}\bra{\phi_1}},
&
(\mc I_A\otimes \mc G)\ket{\psi}_{AB}\bra{\psi} &= 
\trace_E\bk{\ket{\phi_2}\bra{\phi_2}}.
\label{FG2}
\end{align}
To prove the ``only if'' part, note that Eq.~(\ref{V}) implies
\begin{align}
\ket{\phi_2} &= (I_{AB}\otimes U_E) \ket{\phi_1},
\label{phi_UF}
\end{align}
which implies Eq.~(\ref{FG}) via Eqs.~(\ref{FG2}). To prove the ``if''
part, note that Eq.~(\ref{FG}) implies
\begin{align}
\trace_E\bk{\ket{\phi_1}\bra{\phi_1}} &= 
\trace_E\bk{\ket{\phi_2}\bra{\phi_2}}
\label{trace_E_phi}
\end{align}
via Eqs.~(\ref{FG2}), meaning that $\ket{\phi_1}$ and $\ket{\phi_2}$
are purifications of the same density operator on
$\mc H_A\otimes\mc H_B$.  Then there exists a unitary $U_E$ on
$\mc H_E$ such that \cite[Theorem~3.11]{holevo19} 
\begin{align}
\ket{\phi_2} &= (I_{AB}\otimes U_E)\ket{\phi_1},
\end{align}
and Eq.~(\ref{V}) in Def.~\ref{def_reversal} is satisfied.
\end{proof}

To proceed further, it is necessary to establish some linear algebra
first.
\begin{definition}[Hilbert-Schmidt inner product and adjoint]
\label{def_HS}
The Hilbert-Schmidt inner product between two operators $X$ and $Y$
on the same Hilbert space $\mc H$ is defined as
\begin{align}
\avg{Y,X} &\equiv \trace\bk{Y^\dagger X}.
\end{align}
The Hilbert-Schmidt adjoint $\mc M^\HS:\mc O(\mc H') \to \mc O(\mc H)$
of a map $\mc M:\mc O(\mc H) \to \mc O(\mc H')$ is defined by
\begin{align}
\avg{\mc M Y,  X} &= \avg{Y,\mc M^\HS X}
\quad
\forall X,Y.
\end{align}
\end{definition}
\begin{definition}[Connes inner product \cite{connes74} and adjoint]
\label{def_connes}
The Connes inner product between operators $X$ and $Y$ with respect to
a full-rank density operator $\rho$, all on the same Hilbert space
$\mc H$, is defined as
\begin{align}
\avg{Y,X}_\rho &\equiv \avg{Y,\mc E_\rho X} = \trace\bk{Y^\dagger \mc E_\rho X},
\end{align}
where
\begin{align}
\mc E_\rho X &\equiv \rho^{1/2} X \rho^{1/2}
\end{align}
is self-adjoint ($\mc E_\rho = \mc E_\rho^\HS$) and positive-definite
with respect to the Hilbert-Schmidt inner product.

Let $\tau \in \mc O(\mc H')$ be another full-rank density operator.
The adjoint $\mc M^\Con:\mc O(\mc H) \to \mc O(\mc H')$ of a map
$\mc M:\mc O(\mc H') \to \mc O(\mc H)$ with respect to the Connes
inner product is defined by
\begin{align}
\avg{\mc M Y,X}_\rho &= \avg{Y,\mc M^\Con X}_{\tau}
\quad
\forall X \in \mc O(\mc H) ,Y \in \mc O(\mc H').
\label{con}
\end{align}
\end{definition}
Connes introduced this inner product in the context of von Neumann
algebra \cite[Eq.~(1)]{connes74}. The definition here in terms of the
density operator can be found, for example, in
Ref.~\cite[Eq.~(8.17)]{ohya}. It is also called the KMS inner product
in the literature for unknown reasons \cite{carlen17}. The $\Con$
adjoint is instrumental in the works of Accardi and Cecchini
\cite{accardi82} and Petz \cite{petz84,petz}.

\begin{definition}[Petz recovery map \cite{petz84,petz,wilde}]
\label{def_petz}
Given an initial state $\rho \in \mc O(\mc H)$, a CPTP map
$\mc F:\mc O(\mc H)\to \mc O(\mc H')$, and $\tau = \mc F\rho$ for the
$\Con$ adjoint in Def.~\ref{def_connes},
\begin{align}
\mc F^\Petz &\equiv \mc F^{\HS\ \Con\ \HS}
\label{petz}
\end{align}
is called the Petz recovery map. Explicitly,
\begin{align}
\mc F^\Petz &= \mc E_\rho\mc F^\HS  \mc E_{\mc F\rho}^{-1} .
\label{petz_exp}
\end{align}
\end{definition}
Recall Remark~\ref{rem_conj} stating that $\theta$ can always be
assumed to be a conjugation here. It follows that the map defined as
\begin{align}
\Theta X &\equiv \theta^{-1} X \theta = \theta X \theta
\label{Theta}
\end{align}
is also a conjugation with respect to the Hilbert-Schmidt inner
product. Define also the unitary map
$\mc W:\mc O(\mc H_A) \to \mc O(\mc H_B)$ as
\begin{align}
\mc W X &\equiv W X W^\dagger,
\end{align}
and the adjoint map $\mc J:\mc O(\mc H)\to\mc O(\mc H)$ as
\begin{align}
\mc JX &\equiv X^\dagger.
\label{adjoint_map}
\end{align}
These maps will be essential in what follows; their properties are
reviewed in Prop.~\ref{prop_maps}.

Three more lemmas will be needed.
\begin{lemma}
\label{lem_ntilde}
Given Eq.~(\ref{ntilde}),
\begin{align}
\bra{\tilde n}_B X \ket{\tilde m}_B
&= \bra{m}_A \mc J\Theta \mc W^{-1} X \ket{n}_A.
\end{align}
\end{lemma}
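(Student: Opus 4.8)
The plan is to unwind the definition $\ket{\tilde n}_B = W\theta\ket{n}_A$ directly, pushing everything from the $\mc H_B$ side back to the $\mc H_A$ side one operator at a time. Writing $\bra{\tilde n}_B X \ket{\tilde m}_B = (W\theta\ket{n}_A)^\dagger X (W\theta\ket{m}_A)$, I would first strip off the unitary $W$: since $W^\dagger X W = \mc W^{-1}X$, this equals $(\theta\ket{n}_A)^\dagger (\mc W^{-1}X)(\theta\ket{m}_A)$, now an expression purely on $\mc H_A$. The next step is to handle the antiunitary $\theta$. For an antiunitary operator one has the defining relation $\langle \theta\phi, Y\theta\psi\rangle = \langle \theta\phi, Y\theta\psi\rangle$; more usefully, using $\langle\theta a,\theta b\rangle = \langle b,a\rangle$ together with $\theta=\theta^{-1}$ (conjugation, by Remark~\ref{rem_conj}), one gets $\bra{n}\theta^\dagger Y \theta\ket{m}_A$-type manipulations. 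Concretely I expect $\langle\theta n|\, Y\,|\theta m\rangle = \overline{\langle n|\,\theta^{-1}Y\theta\,|m\rangle} = \langle m|\,(\theta^{-1}Y\theta)^\dagger\,|n\rangle = \bra{m}_A \mc J\Theta(Y)\ket{n}_A$, where $Y = \mc W^{-1}X$, $\Theta Y=\theta Y\theta$, and $\mc J$ is the adjoint map. Substituting $Y=\mc W^{-1}X$ yields exactly $\bra{m}_A \mc J\Theta\mc W^{-1}X\ket{n}_A$.

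The one subtlety worth pinning down carefully is the antiunitary bookkeeping: the complex conjugate introduced by $\theta$ acting on the ket must be tracked, and it is precisely this conjugation that produces the adjoint map $\mc J$ and the swap of the roles of $n$ and $m$ on the right-hand side. I would appeal to the standard antiunitary identities collected in Appendix~\ref{app_anti} — in particular $\langle \theta a,\theta b\rangle=\langle b,a\rangle$ and, for a conjugation, $\theta^2=I_A$ — rather than recomputing them. The order of the maps $\mc J\Theta\mc W^{-1}$ in the final expression is dictated by the order in which $W$, then $\theta$, then the conjugation are peeled off; since $\mc W^{-1}$ acts innermost (it came from $W^\dagger X W$) it sits rightmost, $\Theta$ next, and $\mc J$ outermost, matching the statement.

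The main obstacle, such as it is, is purely notational: keeping the bra/ket ordering and the complex-conjugation signs straight through the antiunitary step, so that the $n\leftrightarrow m$ transposition lands correctly and no stray conjugation is left over. There is no real analytic content — everything is finite-dimensional and the identities are elementary — so the proof is a short direct computation once the antiunitary identity is invoked.
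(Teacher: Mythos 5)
Your proof is correct and follows essentially the same route as the paper's: unwind $\ket{\tilde n}_B = W\theta\ket{n}_A$, peel off the unitary $W$, and use the conjugation identities $\avg{\theta a,\theta b}=\avg{b,a}$ and $\theta^2=I_A$ to produce the adjoint map $\mc J$ and the $n\leftrightarrow m$ swap; the only cosmetic difference is that the paper moves $X^\dagger$ across the inner product first, obtaining $\Theta\mc W^{-1}\mc J X$, and then commutes $\mc J$ to the front via Prop.~\ref{prop_maps}, whereas you take the adjoint at the last step and land on $\mc J\Theta\mc W^{-1}X$ directly. (The displayed ``defining relation'' $\avg{\theta\phi, Y\theta\psi}=\avg{\theta\phi, Y\theta\psi}$ in your write-up is a tautological slip, but the concrete computation you give immediately after it is the correct one.)
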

\begin{proof}
  To deal with the antiunitary opreator in Eq.~(\ref{ntilde}), I
  switch temporarily from the braket notation to the proper
  inner-product notation $\avg{\cdot,\cdot}$ and write $\ket{n}_A$ as
  $n_A$. The unitarity of $W$ and the conjugation property of
  $\theta$ lead to
\begin{align}
\bra{\tilde n}_B X \ket{\tilde m}_B
&= \avg{W \theta n_A, X W \theta m_A} = 
\avg{W^\dagger X^\dagger W \theta n_A, \theta m_A}
= \avg{m_A,\theta W^\dagger X^\dagger W \theta n_A} 
\\
&= \bra{m}_A \Theta \mc W^{-1} \mc J X \ket{n}_A.
\end{align}
Since $\mc J$ commutes with any unitary or conjugation map by virtue
of Prop.~\ref{prop_maps}, the lemma follows.
\end{proof}
In the following, I always assume a steady state
$\mc F\sigma = \sigma$ for the Connes inner product and adjoint.
The following lemma follows Ref.~\cite{duvenhage15}.
\begin{lemma}[{Ref.~\cite[Eq.~(3)]{duvenhage15}}]
\label{lem_connes}
For any $X,Y\in \mc O(\mc H_A)$,
\begin{align}
\trace\Bk{(X \otimes \mc W\Theta Y) \ket{\psi}_{AB}\bra{\psi}}
&= \avg{Y,X}_\sigma.
\label{lem_connes2}
\end{align}
\end{lemma}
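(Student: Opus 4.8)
The plan is to expand the rank-one projector $\ket{\psi}_{AB}\bra{\psi}$ in the eigenbasis of $\sigma$ and reduce the trace to $\mc H_B$ matrix elements that Lemma~\ref{lem_ntilde} can convert into $\mc H_A$ matrix elements. First I would use Eq.~(\ref{psi}) to write
\begin{align}
\ket{\psi}_{AB}\bra{\psi} &= \sum_{n,m}\sqrt{p_n p_m}\,\ket{n}_A\bra{m}\otimes\ket{\tilde n}_B\bra{\tilde m},
\end{align}
so that, since $X$ acts on $\mc H_A$ and $\mc W\Theta Y$ acts on $\mc H_B$, the left-hand side of Eq.~(\ref{lem_connes2}) factorizes as
\begin{align}
\trace\Bk{(X\otimes\mc W\Theta Y)\ket{\psi}_{AB}\bra{\psi}}
&= \sum_{n,m}\sqrt{p_n p_m}\,\bra{m}_A X\ket{n}_A\,\bra{\tilde m}_B \mc W\Theta Y\ket{\tilde n}_B.
\end{align}

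Next I would invoke Lemma~\ref{lem_ntilde}, with $X$ there replaced by the operator $\mc W\Theta Y$ and the roles of the indices $n$ and $m$ interchanged, to obtain
\begin{align}
\bra{\tilde m}_B \mc W\Theta Y\ket{\tilde n}_B
&= \bra{n}_A \mc J\Theta\mc W^{-1}\mc W\Theta Y\ket{m}_A.
\end{align}
Because $\Theta$ is a conjugation—so it squares to the identity map, using $\theta^2=I_A$ from Remark~\ref{rem_conj}—and $\mc W^{-1}\mc W$ is the identity map on $\mc O(\mc H_A)$, the right-hand side collapses to $\bra{n}_A \mc J Y\ket{m}_A=\bra{n}_A Y^\dagger\ket{m}_A$. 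Substituting back turns the left-hand side of Eq.~(\ref{lem_connes2}) into $\sum_{n,m}\sqrt{p_n p_m}\,\bra{m}_A X\ket{n}_A\,\bra{n}_A Y^\dagger\ket{m}_A$.

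Finally I would recognize this double sum as a Hilbert–Schmidt trace. With $\sigma^{1/2}=\sum_n\sqrt{p_n}\ket{n}_A\bra{n}$ from Eq.~(\ref{sigma}), inserting the two spectral resolutions gives
\begin{align}
\sum_{n,m}\sqrt{p_n p_m}\,\bra{m}_A X\ket{n}_A\,\bra{n}_A Y^\dagger\ket{m}_A
&= \trace\bk{Y^\dagger \sigma^{1/2} X \sigma^{1/2}} = \avg{Y,\mc E_\sigma X} = \avg{Y,X}_\sigma,
\end{align}
which is exactly the claimed identity. The only delicate point is the bookkeeping in the middle step: one must track which eigenbasis index sits in the bra and which in the ket when applying Lemma~\ref{lem_ntilde}, and use the conjugation property $\Theta^2=\mr{id}$ to cancel the two $\Theta$ factors; the remaining manipulations are a routine relabelling $n\leftrightarrow m$ and recognition of $\mc E_\sigma$.
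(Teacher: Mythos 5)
Your proof is correct and takes essentially the same route as the paper's: expand $\ket{\psi}_{AB}\bra{\psi}$ in the Schmidt form of Eq.~(\ref{psi}), convert the $\mc H_B$ matrix element of $\mc W\Theta Y$ via Lemma~\ref{lem_ntilde} (with $\mc W^{-1}\mc W$ and $\Theta^2$ cancelling to leave $\mc J Y$), and resum to $\trace\bk{Y^\dagger\sigma^{1/2}X\sigma^{1/2}}=\avg{Y,X}_\sigma$. The only difference from the paper is a cosmetic relabelling $n\leftrightarrow m$ of the summation indices.
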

\begin{proof}
\begin{align}
\trace\Bk{(X \otimes \mc W\Theta Y) \ket{\psi}_{AB}\bra{\psi}}
&= 
\sum_{n,m}\sqrt{p_n p_m} 
\bra{n}_A  X\ket{m}_A \bra{\tilde n}_B \mc W \Theta Y\ket{\tilde m}_B
&
(\textrm{by Eq.~(\ref{psi})})
\\
&= 
\sum_{n,m}\sqrt{p_n p_m} 
\bra{n}_A  X\ket{m}_A \bra{m}_A \mc J Y \ket{n}_A
&
(\textrm{by Lemma~\ref{lem_ntilde}})
\\
&= \trace\bk{\sigma^{1/2} X \sigma^{1/2} Y^\dagger}
&
(\textrm{by Eq.~(\ref{sigma})})
\\
&= \avg{Y,X}_\sigma. 
& (\textrm{by Def.~\ref{def_connes}})
\end{align}
\end{proof}

Lemma~\ref{lem_connes} shows the importance of the antiunitary
operator introduced in Eq.~(\ref{ntilde})---the right-hand side of
Eq.~(\ref{lem_connes2}) is antilinear with respect to $Y$, so an
antilinear map $\Theta$ is needed to make the left-hand side
antilinear with respect to $Y$ as well, and the presence of $\Theta$
here can be traced back to Eq.~(\ref{ntilde}) via
Lemma~\ref{lem_ntilde}.

\begin{lemma}[Chain rule]
\label{lem_chain}
\begin{align}
(\mc M_2 \mc M_1)^{*} &= \mc M_1^{*} \mc M_2^{*},
\end{align}
where $*$ is $\HS$, $\Con$, or $\Petz$.
\end{lemma}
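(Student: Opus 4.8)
The plan is to verify the chain rule separately for each of the three cases, exploiting the fact that $\Petz$ is defined as a composition of $\HS$ and $\Con$ (Def.~\ref{def_petz}), so that the $\Petz$ case reduces to the first two. First I would dispose of the $\HS$ case, which is the cleanest: given $\mc M_1:\mc O(\mc H)\to\mc O(\mc H')$ and $\mc M_2:\mc O(\mc H')\to\mc O(\mc H'')$, one simply chases the defining identity $\avg{\mc M_2\mc M_1 Y, X} = \avg{\mc M_1 Y, \mc M_2^\HS X} = \avg{Y, \mc M_1^\HS \mc M_2^\HS X}$ for all $X,Y$, and uniqueness of the $\HS$ adjoint (it is determined by nondegeneracy of the Hilbert-Schmidt inner product) forces $(\mc M_2\mc M_1)^\HS = \mc M_1^\HS\mc M_2^\HS$.

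Next I would treat the $\Con$ case. Here one must keep careful track of which density operators sit at each end: fix full-rank states $\rho$ on $\mc H$, $\tau = $ (image of $\rho$) on $\mc H'$, and $\upsilon$ on $\mc H''$, so that $\mc M_1^\Con$ is taken with respect to $(\rho,\tau)$ and $\mc M_2^\Con$ with respect to $(\tau,\upsilon)$. Then the same adjoint-chasing argument using the Connes inner product, $\avg{\mc M_2\mc M_1 Y, X}_\rho = \avg{\mc M_1 Y, \mc M_2^\Con X}_\tau = \avg{Y, \mc M_1^\Con\mc M_2^\Con X}_\upsilon$, together with nondegeneracy of $\avg{\cdot,\cdot}_\rho$ (which holds because $\mc E_\rho$ is positive-definite), gives $(\mc M_2\mc M_1)^\Con = \mc M_1^\Con\mc M_2^\Con$. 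The one subtlety to flag explicitly is that this statement is only meaningful once the intermediate reference state $\tau$ is the same for both adjoints, i.e.\ the composition of Connes adjoints must be ``composable'' in the sense that the output state of $\mc M_1$ matches the input reference of $\mc M_2$; in the application to the Petz map this is automatic since $\tau = \mc F\rho$ throughout.

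Finally, the $\Petz$ case follows formally from the previous two plus the $\HS$ chain rule applied three times: writing $\mc M_i^\Petz = \mc M_i^{\HS\,\Con\,\HS}$ and using $(\mc M_2\mc M_1)^\HS = \mc M_1^\HS\mc M_2^\HS$, then $(\mc M_1^\HS\mc M_2^\HS)^\Con = \mc M_2^{\HS\,\Con}\mc M_1^{\HS\,\Con}$ by the $\Con$ chain rule, then one more application of the $\HS$ chain rule, the three ``reversals'' compose to give $(\mc M_2\mc M_1)^\Petz = \mc M_1^\Petz\mc M_2^\Petz$. The main obstacle is not any hard computation but the bookkeeping of reference states across the composition: one has to be sure that the $\HS$--$\Con$--$\HS$ sandwich for the composite map uses $\rho$ and $\upsilon$ at the outer ends and $\tau$ in the middle, consistently with Def.~\ref{def_petz}, so that the intermediate adjoints genuinely cancel rather than merely appearing to. Once that is set up correctly, each individual step is a one-line adjoint chase.
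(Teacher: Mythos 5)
The paper states Lemma~\ref{lem_chain} without proof, treating it as a routine fact, so there is no official argument to compare against; your proposal is correct and supplies exactly the standard verification one would expect: adjoint-chasing plus nondegeneracy of the two inner products for $\HS$ and $\Con$, and three nested reversals for $\Petz$. You also correctly identify the only genuine subtlety---that the intermediate Connes reference state for $\mc M_2^{\Con}$ must coincide with the output reference state of $\mc M_1^{\Con}$ (and, one could add, must be full-rank so that $\mc E_{\tau}^{-1}$ exists), which is automatic in the paper's application since everything is anchored at the steady state $\sigma = \mc F\sigma$.
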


All the preceding preparations culminate in the following theorem,
which distills the reversal condition into a precise formula in terms of
the $\mc F$ and $\mc G$ maps, hiding the ``gauge freedom'' of $U_E$ in
the reversal condition.
\begin{theorem}
\label{thm_reversal}
The reversal condition is satisfied if and only if the $\mc F$
and $\mc G$ maps defined by Eqs.~(\ref{F}) and (\ref{G}) obey
\begin{align}
\Aboxed{\mc G &= 
\mc W\Theta \mc F^\Petz \Theta \mc W^{-1},}
\label{G_reversal}
\end{align}
where $\mc F^\Petz$ is the Petz recovery map in Def.~\ref{def_petz}
with respect to the steady state $\sigma = \mc F \sigma$.
\end{theorem}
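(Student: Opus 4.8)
The plan is to build on Lemma~\ref{lem_FG}, which already reduces the reversal condition to the single operator identity $(\mc F\otimes\mc I_B)\ket{\psi}_{AB}\bra{\psi}=(\mc I_A\otimes\mc G)\ket{\psi}_{AB}\bra{\psi}$ on $\mc H_A\otimes\mc H_B$, and to pin this identity down by pairing both sides against a complete family of test operators. The crucial choice is to pair against operators of the form $X\otimes\mc W\Theta Y$ with $X,Y\in\mc O(\mc H_A)$: these span $\mc O(\mc H_A\otimes\mc H_B)$ because $\mc W\Theta$ is an antilinear bijection of $\mc O(\mc H_A)$ onto $\mc O(\mc H_B)$, and, via Lemma~\ref{lem_connes}, each resulting trace collapses to a Connes inner product with respect to the steady state $\sigma$. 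Concretely, Eq.~(\ref{FG}) holds if and only if $\trace\bk{(X\otimes\mc W\Theta Y)(\mc F\otimes\mc I_B)\ket{\psi}_{AB}\bra{\psi}}=\trace\bk{(X\otimes\mc W\Theta Y)(\mc I_A\otimes\mc G)\ket{\psi}_{AB}\bra{\psi}}$ for all $X,Y$.

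On the left-hand side I would move $\mc F$ off the state and onto $X$ by a routine manipulation of the Hilbert--Schmidt adjoint (legitimate because $\mc F$, being completely positive, preserves Hermiticity), turning the trace into $\trace\bk{((\mc F^\HS X)\otimes\mc W\Theta Y)\ket{\psi}_{AB}\bra{\psi}}=\avg{Y,\mc F^\HS X}_\sigma$ by Lemma~\ref{lem_connes}. On the right-hand side I would do the same with $\mc G$ to reach $\trace\bk{(X\otimes\mc G^\HS\mc W\Theta Y)\ket{\psi}_{AB}\bra{\psi}}$, then rewrite $\mc G^\HS\mc W\Theta$ as $\mc W\Theta\mc K$ with $\mc K\equiv\Theta\mc W^{-1}\mc G^\HS\mc W\Theta$ a linear map on $\mc O(\mc H_A)$---this uses only $\Theta^2=\mc I_A$ and the unitarity of $\mc W$---so that Lemma~\ref{lem_connes} gives $\avg{\mc K Y,X}_\sigma$. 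Hence Eq.~(\ref{FG}) is equivalent to $\avg{Y,\mc F^\HS X}_\sigma=\avg{\mc K Y,X}_\sigma$ for all $X,Y\in\mc O(\mc H_A)$.

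Now I would read off the algebra. Since $\sigma$ is full-rank the Connes inner product is nondegenerate, and since $\mc F\sigma=\sigma$ the state $\mc F\sigma$ that enters the Connes adjoint in the Petz recovery map (Def.~\ref{def_petz}) is just $\sigma$ again; so the last display states exactly that $\mc F^\HS=\mc K^\Con$, i.e.\ $\mc K=(\mc F^\HS)^\Con$ (the $\Con$ adjoint being an involution). Substituting $\mc K=\Theta\mc W^{-1}\mc G^\HS\mc W\Theta$ and solving for $\mc G^\HS$ gives $\mc G^\HS=\mc W\Theta(\mc F^\HS)^\Con\Theta\mc W^{-1}$, and taking the Hilbert--Schmidt adjoint of both sides---using $\mc W^\HS=\mc W^{-1}$, $\Theta^\HS=\Theta$, and the fact that the $\HS$ adjoint reverses compositions even through antilinear factors (Prop.~\ref{prop_maps}; the chain rule of Lemma~\ref{lem_chain} also tidies this step)---yields $\mc G=\mc W\Theta\,((\mc F^\HS)^\Con)^\HS\,\Theta\mc W^{-1}=\mc W\Theta\mc F^\Petz\Theta\mc W^{-1}$ by Def.~\ref{def_petz}. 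Every step above is an equivalence, so the ``if and only if'' follows.

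The step I expect to be the main obstacle is not any single computation but the disciplined bookkeeping of adjoints once the antilinear conjugation $\Theta$ is interleaved with $\mc F^\HS$, $\mc G^\HS$, and $\mc W$---in particular keeping straight that $\Theta^\HS=\Theta$ and that composition-reversal of the $\HS$ adjoint survives antilinearity---together with tracking, at each occurrence of the $\Con$ and $\Petz$ adjoints, exactly which density operators they are defined with respect to. It is precisely the steady-state hypothesis $\mc F\sigma=\sigma$ that makes $\mc F\sigma$ coincide with $\sigma$, so that $(\mc F^\HS)^\Con$ assembles into the genuine Petz recovery map of Def.~\ref{def_petz} rather than a $\sigma$-and-$\mc F\sigma$ hybrid; getting this identification right is the linchpin of the argument.
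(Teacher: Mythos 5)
Your proposal is correct and follows essentially the same route as the paper: pairing Eq.~(\ref{FG}) against the test operators $X\otimes\mc W\Theta Y$, collapsing both sides to Connes inner products via Lemma~\ref{lem_connes}, reading off $\mc F^{\HS\ \Con}=\Theta\mc W^{-1}\mc G^\HS\mc W\Theta$, and taking the Hilbert--Schmidt adjoint to assemble the Petz map. The only cosmetic difference is that you run the argument as a single chain of equivalences (justifying the spanning of $X\otimes\mc W\Theta Y$ up front via the bijectivity of $\mc W\Theta$) where the paper separates the two directions and does the spanning argument explicitly in the ``if'' part.
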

\begin{proof}
  Consider the left-hand side of Eq.~(\ref{FG}) and write, using
  Lemma~\ref{lem_connes},
\begin{align}
\trace
\Bk{(X\otimes \mc W \Theta Y) (\mc F\otimes \mc I_B)\ket{\psi}_{AB}\bra{\psi}}
&= 
\trace\Bk{(\mc F^\HS X\otimes \mc W \Theta Y) \ket{\psi}_{AB}\bra{\psi}}
= \avg{Y,\mc F^\HS X}_\sigma.
\label{XYF}
\end{align}
The right-hand side of Eq.~(\ref{FG}) similarly gives
\begin{align}
\trace\Bk{(X\otimes \mc W \Theta Y) (\mc I_A\otimes \mc G)
\ket{\psi}_{AB}\bra{\psi}}
&= \avg{ \Theta \mc W^{-1} \mc G^\HS \mc W \Theta Y,X}_\sigma.
\label{XYG}
\end{align}
To prove the ``only if'' part, observe that the reversal condition
implies the equality of Eqs.~(\ref{XYF}) and (\ref{XYG}) for any $X,Y$
by Lemma~\ref{lem_FG}. By the definition of the $\Con$ adjoint in
Def.~\ref{def_connes}, one obtains
\begin{align}
\mc F^{\HS\ \Con} &=  \Theta \mc W^{-1} \mc G^\HS \mc W \Theta.
\end{align}
Taking the Hilbert-Schmidt adjoint, using the chain rule in
Lemma~\ref{lem_chain}, noting that $\mc W^\HS = \mc W^{-1}$ and
$\Theta^\HS = \Theta$, and applying the definition of the Petz map in
Def.~\ref{def_petz}, one obtains
\begin{align}
\mc F^\Petz &= \mc F^{\HS\ \Con\ \HS} =  \Theta \mc W^{-1} \mc G \mc W \Theta,
\end{align}
which leads to Eq.~(\ref{G_reversal}).

To prove the ``if'' part, retrace the preceding steps backwards to go
from Eq.~(\ref{G_reversal}) to
\begin{align}
\trace\Bk{(X\otimes \mc W \Theta Y) (\mc F\otimes \mc I_B)
\ket{\psi}_{AB}\bra{\psi}}
&= 
\trace\Bk{(X\otimes \mc W \Theta Y) (\mc I_A\otimes \mc G)
\ket{\psi}_{AB}\bra{\psi}}
\label{XYFG}
\end{align}
for any $X,Y \in \mc O(\mc H_A)$. Now express an arbitrary
$Z \in \mc O(\mc H_A\otimes\mc H_B)$ as
\begin{align}
Z &= \sum_{j,k} Z_{jk} a_{j} \otimes b_k
\end{align}
in terms of a matrix $Z_{jk}$, a basis $\{a_j\}$ of
$\mc O(\mc H_A)$, and a basis $\{b_k\}$ of
$\mc O(\mc H_B)$.  Plug $X = a_j$ and $Y = \Theta \mc W^{-1}b_k$ into
Eq.~(\ref{XYFG}) and take the sum $\sum_{j,k}Z_{jk}(\dots)$ to obtain
\begin{align}
\trace\Bk{Z (\mc F\otimes \mc I_B)\ket{\psi}_{AB}\bra{\psi}}
&= 
\trace\Bk{Z (\mc I_A\otimes \mc G)\ket{\psi}_{AB}\bra{\psi}}
\quad
\forall Z \in \mc O(\mc H_A\otimes\mc H_B),
\end{align}
which is equivalent to Eq.~(\ref{FG}) and therefore implies the
reversal condition by Lemma~\ref{lem_FG}.
\end{proof}
Eq.~(\ref{G_reversal}) in Theorem~\ref{thm_reversal} is a necessary
and sufficient condition for system B to be a reverser---any reverser
must obey Eq.~(\ref{G_reversal}), and any system B that obeys it is a
reverser.

Before closing this section, I give a noteworthy corollary.
\begin{corollary}
\label{cor_steady2}
Let 
\begin{align}
\tilde\sigma &\equiv \mc W\Theta\sigma = 
\trace_A\bk{\ket{\psi}_{AB}\bra{\psi}}.
\label{sigma2}
\end{align}
Under the reversal condition, $\tilde\sigma$ is a steady
state of $\mc G$, viz.,
\begin{align}
\mc G \tilde\sigma &= \tilde\sigma.
\label{steady2}
\end{align}
\end{corollary}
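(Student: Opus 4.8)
\emph{Proof plan.} The plan is to read off the closed form $\mc G = \mc W\Theta\mc F^\Petz\Theta\mc W^{-1}$ granted by Theorem~\ref{thm_reversal} and apply it to the operator $\tilde\sigma = \mc W\Theta\sigma$. This gives
\[
\mc G\tilde\sigma = \mc W\Theta\mc F^\Petz\Theta\mc W^{-1}\mc W\Theta\sigma,
\]
so the whole statement reduces to simplifying the right-hand side. The first reduction is $\mc W^{-1}\mc W = \mc I_A$, which is immediate since $\mc W X = W X W^\dagger$ with $W$ unitary. This leaves $\mc G\tilde\sigma = \mc W\Theta\mc F^\Petz\Theta^2\sigma$, and I now need two facts: that $\Theta^2 = \mc I_A$, and that $\mc F^\Petz\sigma = \sigma$.

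For the first, I would invoke Remark~\ref{rem_conj}: $\theta$ may be taken as a conjugation, i.e. $\theta^2 = I_A$, so that $\Theta^2 X = \theta\theta X\theta\theta = X$ by Eq.~(\ref{Theta}). For the second, I would use the explicit formula (\ref{petz_exp}) with $\rho=\sigma$ and $\mc F\sigma=\sigma$:
\[
\mc F^\Petz\sigma = \mc E_\sigma\mc F^\HS\mc E_{\sigma}^{-1}\sigma
= \mc E_\sigma\mc F^\HS I_A = \mc E_\sigma I_A = \sigma,
\]
where $\mc E_\sigma^{-1}\sigma = \sigma^{-1/2}\sigma\sigma^{-1/2} = I_A$ (here full-rankness of $\sigma$ is used), and $\mc F^\HS I_A = I_A$ because the Hilbert--Schmidt adjoint of a trace-preserving map is unital. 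Combining, $\mc G\tilde\sigma = \mc W\Theta\mc F^\Petz\sigma = \mc W\Theta\sigma = \tilde\sigma$, which is Eq.~(\ref{steady2}).

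I would also record the identity $\tilde\sigma = \trace_A(\ket{\psi}_{AB}\bra{\psi})$ asserted in Eq.~(\ref{sigma2}): tracing out $A$ in the Schmidt form (\ref{psi}) yields $\sum_n p_n\ket{\tilde n}_B\bra{\tilde n}$, and by Eq.~(\ref{ntilde}) together with the conjugation property $\theta\ket{n}_A\bra{n}\theta = \ket{\theta n}_A\bra{\theta n}$ this equals $W\theta\sigma\theta W^\dagger = \mc W\Theta\sigma$; alternatively one can simply quote Lemma~\ref{lem_connes} with an appropriate choice of $X$ and $Y$.

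The calculation has no genuine obstacle; the only points demanding care are (i) that the fixed-point property $\mc F^\Petz\sigma=\sigma$ relies on $\sigma$ being the steady state, so that the output state $\tau=\mc F\sigma$ entering Def.~\ref{def_petz} coincides with $\sigma$, and on trace preservation of $\mc F$; and (ii) that $\Theta$ squares to the identity map only because $\theta$ has been chosen as a conjugation rather than a general antiunitary.
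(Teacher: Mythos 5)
Your proposal is correct and follows essentially the same route as the paper: apply the formula $\mc G = \mc W\Theta\mc F^\Petz\Theta\mc W^{-1}$ from Theorem~\ref{thm_reversal} to $\tilde\sigma=\mc W\Theta\sigma$ and use $\mc F^\Petz\sigma=\sigma$. The only difference is that the paper simply cites the recovery property $\mc F^\Petz\mc F\sigma=\sigma$ from the literature, whereas you verify it directly from Eq.~(\ref{petz_exp}) via unitality of $\mc F^\HS$ --- a harmless (and correct) elaboration.
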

\begin{proof}
By the recovery property of $\mc F^\Petz$ \cite{petz,wilde},
\begin{align}
\mc F^\Petz \sigma = \mc F^\Petz \mc F \sigma = \sigma.
\label{petz_recovery}
\end{align}
Then use Eqs.~(\ref{G_reversal}), (\ref{sigma2}), and
(\ref{petz_recovery}) to obtain
\begin{align}
\mc G \tilde\sigma &= \mc W \Theta \mc F^\Petz \Theta \mc W^{-1}\mc W \Theta
\sigma = \mc W\Theta \mc F^\Petz \sigma = \mc W\Theta\sigma = \tilde\sigma.
\end{align}
\end{proof}

\section{\label{sec_DB}Detailed balance}
There exist many quantum generalizations of the detailed balance
condition. As discovered by Roberts~\emph{et al.}~\cite{roberts21},
the one most relevant to the absorber theory is the so-called
SQDB-$\theta$ condition proposed by Fagnola and \umanita\
\cite{fagnola10}, where SQDB stands for standard quantum detailed
balance. It turns out that, if system A satisfies the SQDB-$\theta$
condition, the reversal condition given by Theorem~\ref{thm_reversal}
for the whole system can be simplified significantly. Here I define a
discrete-time version of the SQDB-$\theta$ condition for mathematical
simplicity.

\begin{definition}[Discrete-time SQDB-$\theta$ condition]
\label{def_DB}
A CPTP map $\mc F$ and its steady state $\sigma$ are said to satisfy
the SQDB-$\theta$ condition if
\begin{align}
\avg{\mc F^{n\ \HS} Y,X}_\sigma &= \avg{\mc F^{n\ \HS}\Theta X,\Theta Y}_\sigma,
\quad
\forall X,Y \in \mc O(\mc H_A), \quad \forall n = 0,1,2,\dots
\label{DB}
\end{align}
\end{definition}

Physically, $\mc F^n$ is the channel for system A after it interacts
with $n$ temporal field modes sequentially, each with the same initial
state $\ket{\chi}_E$. With further assumptions about $\mc F$, a
continuous-time limit of $\mc F^n$ can be taken to give a quantum
Markov semigroup, which is discussed in more detail in
Sec.~\ref{sec_semigroup}.

The following theorem originates from Ref.~\cite{fagnola10}; I provide
a proof for completeness.
\begin{theorem}[Ref.~\cite{fagnola10}]
\label{thm_DB}
The SQDB-$\theta$ condition in Def.~\ref{def_DB} implies each of the
following two conditions:
\begin{align}
\Aboxed{\sigma &= \Theta\sigma,}
\label{sigma_invar}
\\
\Aboxed{\mc F &= \Theta \mc F^\Petz \Theta.}
\label{DB_petz}
\end{align}
Conversely, the two conditions together imply the SQDB-$\theta$
condition.
\end{theorem}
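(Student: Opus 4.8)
## Proof Proposal for Theorem~\ref{thm_DB}

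The plan is to work directly from the SQDB-$\theta$ identity~(\ref{DB}) in two directions. First I would extract Eq.~(\ref{sigma_invar}): specialize the $n=0$ case of~(\ref{DB}), where $\mc F^{0\ \HS} = \mc I_A$, to get $\avg{Y,X}_\sigma = \avg{\Theta X,\Theta Y}_\sigma$ for all $X,Y$. Unpacking the Connes inner product, this reads $\trace(Y^\dagger\sigma^{1/2}X\sigma^{1/2}) = \trace((\Theta X)^\dagger \sigma^{1/2}(\Theta Y)\sigma^{1/2})$. Using $\Theta X = \theta X\theta$, the conjugation property $\theta^2 = I_A$, antiunitarity of $\theta$, and the fact that $\mc J$ (adjoint) commutes with $\Theta$ (Prop.~\ref{prop_maps}), I would rewrite the right-hand side as $\trace(Y^\dagger \theta\sigma^{1/2}\theta\, X\, \theta\sigma^{1/2}\theta)$ — i.e.\ replacing $\sigma^{1/2}$ by $\theta\sigma^{1/2}\theta = (\Theta\sigma)^{1/2}$ (noting $\Theta$ preserves positivity and square roots). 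Since this holds for all $X,Y$, the two positive operators $\mc E_\sigma$ and $\mc E_{\Theta\sigma}$ coincide, hence $\sigma = \Theta\sigma$.

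Next, for Eq.~(\ref{DB_petz}), I would take the $n=1$ case: $\avg{\mc F^\HS Y, X}_\sigma = \avg{\mc F^\HS \Theta X, \Theta Y}_\sigma$. The left-hand side, by the definition of the $\Con$ adjoint (Def.~\ref{def_connes}) with $\tau = \mc F\sigma = \sigma$, equals $\avg{Y, \mc F^{\HS\ \Con} X}_\sigma$. For the right-hand side, I would move $\Theta$ across the inner product using its self-adjointness and conjugation property with respect to $\avg{\cdot,\cdot}_\sigma$ — here I need that $\Theta$ is an isometry for the Connes inner product, which uses precisely $\sigma = \Theta\sigma$ established above (so $\avg{\Theta X,\Theta Y}_\sigma = \avg{Y,X}_\sigma = \overline{\avg{X,Y}_\sigma}$, the conjugation flipping the order). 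Thus the right-hand side becomes $\avg{\Theta Y, \Theta\mc F^\HS\Theta X}_\sigma = \avg{X, \Theta\mc F^\HS\Theta\, Y}_\sigma$ after another application, which I can rearrange to $\avg{Y, (\Theta\mc F^\HS\Theta)^{\Con} X}_\sigma$. Matching both sides for all $X,Y$ gives $\mc F^{\HS\ \Con} = \Theta\mc F^\HS\Theta$ (using that $\Theta$ commutes with $\Con$-adjoint when $\sigma$ is $\Theta$-invariant). Taking the $\HS$-adjoint of both sides, applying the chain rule (Lemma~\ref{lem_chain}), and using $\Theta^\HS = \Theta$, I get $\mc F^\Petz = \mc F^{\HS\ \Con\ \HS} = \Theta\mc F^{\HS\ \HS}\Theta = \Theta\mc F\Theta$, which rearranges to Eq.~(\ref{DB_petz}) since $\Theta^2 = \mc I_A$.

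For the converse, I would assume~(\ref{sigma_invar}) and~(\ref{DB_petz}) and prove~(\ref{DB}) by induction on $n$. The $\Theta$-invariance of $\sigma$ again gives that $\Theta$ is a conjugation for $\avg{\cdot,\cdot}_\sigma$, so $\avg{\mc F^{n\ \HS}\Theta X,\Theta Y}_\sigma = \avg{\Theta\mc F^{n\ \HS}\Theta X, Y}_\sigma$ (order flip from the antilinearity) $= \avg{(\Theta\mc F^{\HS}\Theta)^n X, Y}_\sigma$ since $\Theta^2=\mc I_A$ telescopes the conjugations. From~(\ref{DB_petz}), $\Theta\mc F\Theta = \mc F^\Petz$, and taking $\HS$-adjoints, $\Theta\mc F^\HS\Theta = \mc F^{\Petz\ \HS} = \mc F^{\Con\ \HS\ \HS} = \mc F^{\HS\ \Con}$ (chain rule plus $\HS$ being an involution). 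Hence $(\Theta\mc F^\HS\Theta)^n = (\mc F^{\HS\ \Con})^n = (\mc F^{n\ \HS})^{\Con}$ — here I use that $(\cdot)^\Con$ reverses products (Lemma~\ref{lem_chain}) and that iterating $\mc F$ preserves $\sigma$ as the reference state, so all the $\Con$-adjoints are taken consistently with respect to $\sigma$. Therefore $\avg{(\mc F^{n\ \HS})^\Con X, Y}_\sigma = \avg{X, \mc F^{n\ \HS} Y}_\sigma = \overline{\avg{\mc F^{n\ \HS}Y, X}_\sigma}$; but conjugation symmetry of the argument (running the same manipulation on both sides) forces the desired equality~(\ref{DB}).

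The main obstacle I anticipate is the careful bookkeeping of antilinearity: because $\Theta$ is a conjugation, sliding it across the (sesquilinear) Connes inner product swaps the two slots and introduces complex conjugates, so each manipulation must track whether an expression is linear or antilinear in $X$ versus $Y$. A closely related subtlety is justifying that $\Theta$ commutes with the $\Con$-adjoint operation — this is \emph{not} automatic and genuinely requires $\sigma = \Theta\sigma$, so the logical order (establishing~(\ref{sigma_invar}) before~(\ref{DB_petz})) is essential and must be respected in the converse direction too. Everything else is routine algebra with the maps $\mc E_\sigma$, $\mc J$, $\Theta$, whose commutation relations are collected in Prop.~\ref{prop_maps}.
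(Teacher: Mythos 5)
Your proposal follows essentially the same route as the paper: the $n=0$ case of Eq.~(\ref{DB}) yields $\sigma=\Theta\sigma$, the flip identity $\avg{\Theta X,\Theta Y}_\sigma=\avg{Y,X}_{\Theta\sigma}$ (the paper's Lemma~\ref{lem_DB2}) converts the $n\ge 1$ cases into $\mc F^{n\ \HS\ \Con}=\Theta\mc F^{n\ \HS}\Theta$ and hence $\mc F^{n\ \Petz}=\Theta\mc F^{n}\Theta$, and the converse retraces these steps via the chain rule, exactly as in the paper. The argument is correct, though you should tidy the intermediate slot bookkeeping in the forward $n=1$ step: $\avg{\mc F^\HS\Theta X,\Theta Y}_\sigma$ passes directly to $\avg{Y,\Theta\mc F^\HS\Theta X}_\sigma$ by the flip rule, with no need for the detour through $\bk{\Theta\mc F^\HS\Theta}^{\Con}$, which as written would give $\mc F^{\HS}=\Theta\mc F^{\HS}\Theta$ rather than the intended $\mc F^{\HS\ \Con}=\Theta\mc F^{\HS}\Theta$.
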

To prove this theorem, I need the following lemma first.
\begin{lemma}
\label{lem_DB2}
For any $X,Y \in \mc O(\mc H_A)$,
\begin{align}
\avg{\Theta X,\Theta Y}_\sigma &= \avg{Y,X}_{\Theta\sigma},
\label{DB0}
\\
\avg{\mc F^{n\ \HS} \Theta X,\Theta Y}_\sigma
&= \avg{Y,\Theta \mc F^{n\ \HS} \Theta X}_{\Theta\sigma}.
\label{DBn}
\end{align}
\end{lemma}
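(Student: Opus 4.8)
The plan is to prove Eq.~(\ref{DB0}) directly from the definitions and then obtain Eq.~(\ref{DBn}) as an immediate corollary of it.

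For Eq.~(\ref{DB0}), I would unfold the Connes inner product through $\mc E_\sigma$ and commute the conjugation $\Theta$ past each factor, relying on three facts from the linear algebra already set up and from Prop.~\ref{prop_maps}: (i) $\Theta$ commutes with the adjoint map $\mc J$, so that $(\Theta X)^\dagger = \Theta(X^\dagger)$; (ii) because $\theta^2 = I_A$ (Remark~\ref{rem_conj}), $\Theta$ is an antilinear $*$-automorphism preserving positivity, hence $\Theta\sigma^{1/2} = (\Theta\sigma)^{1/2}$, equivalently $\mc E_\sigma\Theta = \Theta\mc E_{\Theta\sigma}$; and (iii) $\trace(\theta Z\theta) = \overline{\trace Z}$ for every operator $Z$, since $\{\theta\ket{e_i}\}$ is orthonormal whenever $\{\ket{e_i}\}$ is. Chaining these gives $\avg{\Theta X,\Theta Y}_\sigma = \trace\!\bk{\Theta(X^\dagger)\,\Theta(\mc E_{\Theta\sigma}Y)} = \trace\!\bk{\theta\,X^\dagger\,(\mc E_{\Theta\sigma}Y)\,\theta} = \overline{\trace\!\bk{X^\dagger\mc E_{\Theta\sigma}Y}} = \overline{\avg{X,Y}_{\Theta\sigma}}$, and the conjugate symmetry of the Connes inner product converts the last expression into $\avg{Y,X}_{\Theta\sigma}$, which is Eq.~(\ref{DB0}).

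Equation~(\ref{DBn}) then follows by inserting $\Theta^2 = \mc I_A$: writing $\mc F^{n\ \HS}\Theta X = \Theta\bk{\Theta\mc F^{n\ \HS}\Theta X}$ and applying Eq.~(\ref{DB0}) with $X$ replaced by $\Theta\mc F^{n\ \HS}\Theta X$ immediately yields $\avg{\mc F^{n\ \HS}\Theta X,\Theta Y}_\sigma = \avg{Y,\Theta\mc F^{n\ \HS}\Theta X}_{\Theta\sigma}$. Nothing about $\mc F$ beyond linearity is used; this lemma is purely about the interplay of $\Theta$ with $\mc E_\sigma$, and the SQDB-$\theta$ hypothesis enters only in the theorem that invokes the lemma.

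The main obstacle is simply keeping the antilinear bookkeeping straight: one must verify that the complex conjugation produced in step (iii) is precisely the one removed by the conjugate symmetry of $\avg{\cdot,\cdot}_{\Theta\sigma}$, and that $\Theta$ genuinely commutes with the square-root functional calculus. Both reduce to the conjugation property $\theta^2 = I_A$ and the adjoint/conjugation identities collected in Prop.~\ref{prop_maps}, so no tools beyond those already in the excerpt are required.
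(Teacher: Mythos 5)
Your proof is correct and follows essentially the same route as the paper: the paper proves Eq.~(\ref{DB0}) by the chain $\avg{\Theta X,\Theta Y}_\sigma=\avg{\Theta X,\mc E_\sigma\Theta Y}=\avg{\Theta\mc E_\sigma\Theta Y,X}=\avg{\mc E_{\Theta\sigma}Y,X}=\avg{Y,X}_{\Theta\sigma}$, which packages your explicit trace-conjugation and adjoint bookkeeping into the Hilbert--Schmidt antiunitarity of $\Theta$ and the identity $\Theta\mc E_\sigma\Theta=\mc E_{\Theta\sigma}$ from Prop.~\ref{prop_maps}. Your derivation of Eq.~(\ref{DBn}) by substituting $X\to\Theta\mc F^{n\ \HS}\Theta X$ into Eq.~(\ref{DB0}) is exactly the paper's step.
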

\begin{proof}
  To prove Eq.~(\ref{DB0}), use Def.~\ref{def_connes} and
  Prop.~\ref{prop_maps} to write
\begin{align}
\avg{\Theta X,\Theta Y}_\sigma &= \avg{\Theta X,\mc E_\sigma \Theta Y}
= \avg{\Theta \mc E_\sigma \Theta Y,X} = \avg{\mc E_{\Theta\sigma} Y,X}
= \avg{Y,X}_{\Theta\sigma}.
\end{align}
To obtain Eq.~(\ref{DBn}), replace $X$ in Eq.~(\ref{DB0}) by
$\Theta \mc F^{n\ \HS} \Theta X$ .
\end{proof}

\begin{proof}[Proof of Theorem~\ref{thm_DB}]
  First prove the forward direction.  To derive
  Eq.~(\ref{sigma_invar}), combine Eq.~(\ref{DB}) for $n = 0$ and
  Eq.~(\ref{DB0}) to obtain
\begin{align}
\avg{Y,X}_\sigma &= \avg{\Theta X,\Theta Y}_\sigma= \avg{Y,X}_{\Theta\sigma}.
\end{align}
Then plug $Y = I$ to obtain
\begin{align}
\avg{I,X}_\sigma &= \trace\bk{\sigma X} = \avg{I,X}_{\Theta\sigma}
= \trace\Bk{(\Theta\sigma) X},
\end{align}
which holds for any $X$, leading to $\Theta\sigma = \sigma$.

For $n > 0$, write
\begin{align}
\avg{\mc F^{n\ \HS}Y,X}_\sigma &= \avg{\mc F^{n\ \HS} \Theta X,\Theta Y}_\sigma
&
(\textrm{by Eq.~(\ref{DB})})
\\
&= \avg{Y,\Theta \mc F^{n\ \HS} \Theta X}_{\sigma},
&
(\textrm{by Eq.~(\ref{DBn}) and $\Theta\sigma = \sigma$})
\end{align}
which implies, by the definition of $\Con$ in Def.~\ref{def_connes},
\begin{align}
\mc F^{n\ \HS\ \Con} &= \Theta \mc F^{n\ \HS} \Theta,
\label{DB1}
\\
\mc F^{n\ \HS\ \Con\ \HS} &= 
\mc F^{n\ \Petz} = \Theta \mc F^{n} \Theta,
\label{DB2}
\end{align}
giving Eq.~(\ref{DB_petz}) for $n = 1$.

Now prove the converse. Eq.~(\ref{sigma_invar}) can be plugged into
Eq.~(\ref{DB0}) to give Eq.~(\ref{DB}) for $n = 0$.  To derive
Eq.~(\ref{DB}) for $n > 0$, take the $n$th power of
Eq.~(\ref{DB_petz}) to obtain Eq.~(\ref{DB2}) by the chain rule in
Lemma~\ref{lem_chain}. Eq.~(\ref{DB2}) is equivalent to
Eq.~(\ref{DB1}), which can be plugged into the right-hand side of
Eq.~(\ref{DBn}).  Plug also $\Theta\sigma = \sigma$ and use the
definition of $\Con$ to obtain
\begin{align}
\avg{\mc F^{n\ \HS} \Theta X,\Theta Y}_\sigma
&= \avg{Y,\mc F^{n\ \HS\ \Con} X}_{\sigma}
= \avg{\mc F^{n\ \HS}Y, X}_{\sigma},
\end{align}
which is Eq.~(\ref{DB}) for $n > 0$.
\end{proof}

The remarkable simplification of the Petz map under the SQDB-$\theta$
condition simplifies the reversal condition as well.

\begin{corollary}
\label{cor_G_DB}
If system A satisfies Eq.~(\ref{DB_petz}), the reversal condition is
satisfied if and only if
\begin{align}
\Aboxed{\mc G &= \mc W \mc F \mc W^{-1}.}
\label{G_DB}
\end{align}
\end{corollary}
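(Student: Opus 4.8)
The plan is to read the result straight off Theorem~\ref{thm_reversal} after using the detailed-balance identity~(\ref{DB_petz}) to eliminate the Petz map. Theorem~\ref{thm_reversal} already asserts that, in the model of Sec.~\ref{sec_model}, the reversal condition holds if and only if $\mc G = \mc W\Theta\mc F^\Petz\Theta\mc W^{-1}$, so the only task is to show that, when system A obeys~(\ref{DB_petz}), this right-hand side simplifies to $\mc W\mc F\mc W^{-1}$, and that the two simplified conditions are equivalent.

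The key observation is that $\Theta$ is an involution on $\mc O(\mc H_A)$. By Remark~\ref{rem_conj} we may take $\theta$ to be a conjugation, so $\theta^2 = I_A$, and then Eq.~(\ref{Theta}) gives $\Theta^2 X = \theta\theta X\theta\theta = X$, i.e.\ $\Theta^2 = \mc I_A$. Conjugating both sides of Eq.~(\ref{DB_petz}) by $\Theta$ therefore yields $\Theta\mc F\Theta = \Theta\Theta\mc F^\Petz\Theta\Theta = \mc F^\Petz$, so the SQDB-$\theta$ consequence~(\ref{DB_petz}) is in fact equivalent to the statement $\mc F^\Petz = \Theta\mc F\Theta$.

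Substituting $\mc F^\Petz = \Theta\mc F\Theta$ into the formula of Theorem~\ref{thm_reversal} gives $\mc G = \mc W\Theta(\Theta\mc F\Theta)\Theta\mc W^{-1} = \mc W\mc F\mc W^{-1}$, again using $\Theta^2 = \mc I_A$; running the same substitution backwards turns $\mc G = \mc W\mc F\mc W^{-1}$ into $\mc G = \mc W\Theta\mc F^\Petz\Theta\mc W^{-1}$. Since every implication is reversible, this proves the ``if and only if.'' I do not expect any genuine obstacle: the corollary is essentially a one-line algebraic consequence of Theorem~\ref{thm_reversal} together with the involutivity of $\Theta$, and the only point requiring a moment's care is that Eq.~(\ref{DB_petz}) may be inverted at all, which is exactly what $\Theta^2 = \mc I_A$ guarantees.
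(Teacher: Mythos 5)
Your argument is correct and is essentially the paper's own proof, which simply plugs Eq.~(\ref{DB_petz}) into Eq.~(\ref{G_reversal}) (i.e.\ recognizes $\mc W\Theta\mc F^\Petz\Theta\mc W^{-1} = \mc W\mc F\mc W^{-1}$ directly); your extra step of inverting (\ref{DB_petz}) via $\Theta^2 = \mc I_A$ is a harmless elaboration of the same substitution.
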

\begin{proof}
  Plug Eq.~(\ref{DB_petz}) into Eq.~(\ref{G_reversal}).
\end{proof}

It is interesting to observe that system B mirrors properties of
system A under the reversal condition, as shown by
Corollary~\ref{cor_steady2} and the corollary below.
\begin{corollary}
\label{cor_G_DB2}
  If system A satisfies the SQDB-$\theta$ condition and the whole
  system satisfies the reversal condition, then $\mc G$ and its steady
  state $\tilde\sigma$ also satisfy a SQDB-$\tilde\theta$ condition given by
\begin{align}
\tilde\sigma &= \tilde\Theta\tilde\sigma,
\\
\mc G &= \tilde\Theta \mc G^\Petz \tilde\Theta,
\label{DB_petz2}
\end{align}
where the conjugation $\tilde\theta$ on $\mc H_B$, the conjugation
$\tilde\Theta$ on $\mc O(\mc H_B)$, and $\mc G^\Petz$ are respectively
defined by
\begin{align}
\tilde\theta &\equiv W \theta W^\dagger,
\\
\tilde\Theta X &\equiv \tilde\theta X \tilde\theta = \mc W\Theta\mc W^{-1} X,
\label{Theta2}
\\
\mc G^\Petz &\equiv \mc E_{\tilde\sigma}
\mc G^\HS \mc E_{\mc G\tilde\sigma}^{-1} = \mc E_{\tilde\sigma}
\mc G^\HS \mc E_{\tilde\sigma}^{-1}.
\label{petz2}
\end{align}
\end{corollary}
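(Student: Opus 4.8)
The plan is to reduce every claim about system~B to a fact about system~A by conjugation with $\mc W$, exploiting the two consequences now available: Theorem~\ref{thm_DB} turns the SQDB-$\theta$ hypothesis on $\mc F$ into $\sigma = \Theta\sigma$ and $\mc F = \Theta\mc F^\Petz\Theta$, while Corollary~\ref{cor_G_DB} then turns the reversal condition into the clean relation $\mc G = \mc W\mc F\mc W^{-1}$. Everything below is a matter of pushing these through $\mc W$.

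First I would dispose of the elementary bookkeeping. Since $\theta$ is a conjugation and $W$ is unitary, $\tilde\theta = W\theta W^\dagger$ is antiunitary with $\tilde\theta^2 = W\theta^2 W^\dagger = I_B$, hence a conjugation; expanding $\tilde\theta X\tilde\theta = W\theta(W^\dagger X W)\theta W^\dagger$ and using $\Theta Z = \theta Z\theta$ from Eq.~(\ref{Theta}) gives $\tilde\Theta X = \mc W\Theta\mc W^{-1}X$, which is Eq.~(\ref{Theta2}); in particular $\tilde\Theta$ is an involution with $\tilde\Theta^\HS = \tilde\Theta$ by Prop.~\ref{prop_maps}. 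For the steady state, Corollary~\ref{cor_steady2} gives $\tilde\sigma = \mc W\Theta\sigma$, and since $\Theta\sigma = \sigma$ by Theorem~\ref{thm_DB} this is $\tilde\sigma = \mc W\sigma$; then $\tilde\Theta\tilde\sigma = \mc W\Theta\mc W^{-1}\mc W\sigma = \mc W\Theta\sigma = \mc W\sigma = \tilde\sigma$, which is the first assertion of the corollary.

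Next I would establish $\mc G^\Petz = \mc W\mc F^\Petz\mc W^{-1}$, taken with respect to $\tilde\sigma$. The direct route is to note $\mc E_{\tilde\sigma} = \mc E_{\mc W\sigma} = \mc W\mc E_\sigma\mc W^{-1}$ (because $(W\sigma W^\dagger)^{1/2} = W\sigma^{1/2}W^\dagger$), $\mc G^\HS = \mc W\mc F^\HS\mc W^{-1}$ (because $\mc W^\HS = \mc W^{-1}$ and the chain rule, Lemma~\ref{lem_chain}), and $\mc G\tilde\sigma = \tilde\sigma$; substituting these into the definition Eq.~(\ref{petz2}) and cancelling the $\mc W^{-1}\mc W$ pairs yields $\mc G^\Petz = \mc W(\mc E_\sigma\mc F^\HS\mc E_\sigma^{-1})\mc W^{-1} = \mc W\mc F^\Petz\mc W^{-1}$, using $\mc F^\Petz = \mc E_\sigma\mc F^\HS\mc E_{\mc F\sigma}^{-1}$ with $\mc F\sigma = \sigma$. (One could instead invoke Lemma~\ref{lem_chain} for $\Petz$ together with the observation $\mc W^\Petz = \mc W^{-1}$, but the computation above is shorter.) Assembling the pieces: $\mc G = \mc W\mc F\mc W^{-1} = \mc W(\Theta\mc F^\Petz\Theta)\mc W^{-1}$ by Theorem~\ref{thm_DB}, and inserting $\mc W^{-1}\mc W$ twice rewrites this as $(\mc W\Theta\mc W^{-1})(\mc W\mc F^\Petz\mc W^{-1})(\mc W\Theta\mc W^{-1}) = \tilde\Theta\mc G^\Petz\tilde\Theta$, which is Eq.~(\ref{DB_petz2}). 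If desired, the converse part of Theorem~\ref{thm_DB}, applied now to $\mc G$, $\tilde\sigma$, and $\tilde\Theta$, upgrades these two identities to the full SQDB-$\tilde\theta$ condition.

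The only point that genuinely needs care -- and thus the main obstacle, modest as it is -- is the consistency of reference states for the Petz map: the Petz map of $\mc G$ must be the one attached to its own steady state $\tilde\sigma$, which is precisely $\mc W\sigma$, the image under $\mc W$ of the reference state $\sigma$ used for $\mc F^\Petz$. Once this matching is verified (via Corollary~\ref{cor_steady2} and Theorem~\ref{thm_DB}), the $\mc W$-conjugation goes through verbatim, and the antilinear algebra with $\tilde\Theta$ is routine given Prop.~\ref{prop_maps} and Lemma~\ref{lem_chain}.
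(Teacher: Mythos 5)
Your proposal is correct and follows essentially the same route as the paper: both verify $\tilde\Theta\tilde\sigma=\tilde\sigma$ from $\Theta\sigma=\sigma$, both establish $\mc E_{\tilde\sigma}=\mc W\mc E_\sigma\mc W^{-1}$ and hence $\mc G^\Petz=\mc W\mc F^\Petz\mc W^{-1}$, and both then combine $\mc G=\mc W\mc F\mc W^{-1}$ with $\mc F=\Theta\mc F^\Petz\Theta$ (equivalently Eq.~(\ref{G_reversal})) to get $\mc G=\tilde\Theta\mc G^\Petz\tilde\Theta$. The only cosmetic difference is the order in which the final substitution is made, and your closing remark about the reference state $\tilde\sigma=\mc W\sigma$ being the correct one for $\mc G^\Petz$ is exactly the point the paper handles via Corollary~\ref{cor_steady2}.
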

\begin{proof}
  Use Eqs.~(\ref{Theta2}), (\ref{sigma2}), and (\ref{sigma_invar})
  to write
\begin{align}
\tilde\Theta\tilde\sigma &= \mc W \Theta \mc W^{-1} \mc W \Theta\sigma
= \mc W \sigma = \mc W\Theta\sigma = \tilde\sigma.
\end{align}
One can also show
\begin{align}
\mc G^\Petz &= \mc E_{\tilde\sigma}\mc G^\HS \mc E_{\tilde\sigma}^{-1}
= \mc W\mc E_{\Theta\sigma}\mc W^{-1} \mc G^\HS \mc W\mc E_{\Theta\sigma}^{-1}
\mc W^{-1} 
&
(\textrm{by Eq.~(\ref{sigma2}) and Prop.~\ref{prop_maps}})
\\
&= \mc W\mc E_{\sigma}\mc W^{-1} \mc G^\HS \mc W\mc E_{\sigma}^{-1}
\mc W^{-1} 
&
(\textrm{by Eq.~(\ref{sigma_invar})})
\\
&= \mc W\mc E_{\sigma}\mc F^\HS \mc E_{\sigma}^{-1}\mc W^{-1} 
&
(\textrm{by Eq.~(\ref{G_DB})})
\\
&= \mc W \mc F^\Petz \mc W^{-1} 
&
(\textrm{by Eq.~(\ref{petz_exp}) and $\mc F\sigma = \sigma$})
\\
&= \mc W \Theta \mc W^{-1}\mc G \mc W \Theta  \mc W^{-1}
&
(\textrm{by Eq.~(\ref{G_reversal})})
\\
&= \tilde\Theta \mc G \tilde\Theta,
&
(\textrm{by Eq.~(\ref{Theta2})})
\end{align}
which leads to Eq.~(\ref{DB_petz2}).
\end{proof}

\section{\label{sec_special}Special reversal condition}
A shortcoming of the reversal condition is that it is too vague about
the system-B dynamics $V$ and the intermediate $U_E$ needed to make
system B a reverser. Only certain properties of $V$ are specified
through the relations between the $\mc F$ and $\mc G$ maps, while
$U_E$ is left completely unspecified, making the experimental design
of the reverser difficult.  To overcome the shortcoming, I now focus
on a special reversal condition, where $U_E$ is assumed to be the
identity and the $V$ that makes system B a reverser can be
characterized more explicitly.

\begin{definition}[Special reversal condition]
\label{def_reversal2}
The whole system is said to satisfy the special reversal condition if
Def.~\ref{def_reversal} is satisfied with $U_E = I_E$, viz.,
\begin{align}
  \Aboxed{(I_A \otimes V)(U\otimes I_B) \ket{\psi}_{AB}\otimes\ket{\chi}_E
  &= \ket{\psi}_{AB}\otimes\ket{\tilde\chi}_E.}
\label{V2}
\end{align}
\end{definition}
This definition, apart from allowing $\ket{\tilde\chi}_E$ to be
distinct from $\ket{\chi}_E$, is identical to the absorber condition
assumed by Godley and Guta in their Lemma~4.1 \cite{godley23}. They
have also proved that there always exists a $V$ that satisfies the
condition; I include their result here for completeness.

\begin{proposition}[{Ref.~\cite[Lemma~4.1]{godley23}}]
\label{prop_godley}
A unitary $V$ on $\mc H_B\otimes \mc H_E$ that satisfies
Eq.~(\ref{V2}) always exists.
\end{proposition}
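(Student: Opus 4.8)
\textit{The plan} is to reduce the statement to the unitary equivalence of purifications---the same fact, \cite[Theorem~3.11]{holevo19}, that was used in the proof of Lemma~\ref{lem_FG}---by exhibiting $\ket{\psi}_{AB}\otimes\ket{\tilde\chi}_E$ and $(U\otimes I_B)\ket{\psi}_{AB}\otimes\ket{\chi}_E$ as two purifications, on the \emph{same} purifying Hilbert space $\mc H_B\otimes\mc H_E$, of the \emph{same} density operator $\sigma$ on $\mc H_A$. The unitary connecting them is then the desired $V$.

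First I would reuse the vector $\ket{\phi_1} = (U\otimes I_B)\ket{\psi}_{AB}\otimes\ket{\chi}_E$ from Eq.~(\ref{pur1}) and compute its marginal on $\mc H_A$. Since $U$ acts trivially on $\mc H_B$, the partial trace $\trace_{BE}$ may be evaluated as $\trace_E\circ\trace_B$: tracing out $\mc H_B$ first gives $U(\sigma\otimes\ket{\chi}_E\bra{\chi})U^\dagger$ by the purification property $\trace_B\ket{\psi}_{AB}\bra{\psi}=\sigma$ of Eq.~(\ref{psi}), and then tracing out $\mc H_E$ gives $\trace_E[U(\sigma\otimes\ket{\chi}_E\bra{\chi})U^\dagger]=\mc F\sigma=\sigma$ by Eqs.~(\ref{F}) and (\ref{steady}). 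Hence $\trace_{BE}(\ket{\phi_1}\bra{\phi_1})=\sigma$. Next I would fix an arbitrary unit vector $\ket{\tilde\chi}_E\in\mc H_E$ and note trivially that $\trace_{BE}(\ket{\psi}_{AB}\bra{\psi}\otimes\ket{\tilde\chi}_E\bra{\tilde\chi})=\trace_B\ket{\psi}_{AB}\bra{\psi}=\sigma$ as well. Both $\ket{\phi_1}$ and $\ket{\psi}_{AB}\otimes\ket{\tilde\chi}_E$ therefore live in $\mc H_A\otimes(\mc H_B\otimes\mc H_E)$ and purify $\sigma$ with the identical purifying factor $\mc H_B\otimes\mc H_E$, so \cite[Theorem~3.11]{holevo19} supplies a unitary $V$ on $\mc H_B\otimes\mc H_E$ with $(I_A\otimes V)\ket{\phi_1}=\ket{\psi}_{AB}\otimes\ket{\tilde\chi}_E$, which is precisely Eq.~(\ref{V2}).

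There is no real obstacle: the argument is essentially a one-line consequence of the purification-equivalence theorem, the only nontrivial input being the steady-state identity $\mc F\sigma=\sigma$ used to match the $\mc H_A$-marginals. The one point meriting a moment's care is that the two purifications share the \emph{same} purifying system $\mc H_B\otimes\mc H_E$, which is what guarantees a genuine unitary $V$ rather than a mere isometry; this holds here by construction. It is also worth remarking that $\ket{\tilde\chi}_E$ is entirely free in this construction, which already hints at the non-uniqueness of the reverser.
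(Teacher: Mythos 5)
Your proof is correct, and it takes a recognizably different route from the paper's, even though both hinge on the same starting observation that $\trace_{BE}\bk{\ket{\phi_1}\bra{\phi_1}} = \mc F\sigma = \sigma$. You treat $\mc H_B\otimes\mc H_E$ as a single purifying system and invoke the unitary equivalence of two purifications of $\sigma$ on that \emph{common} purifying space---the same fact \cite[Theorem~3.11]{holevo19} already used in Lemma~\ref{lem_FG}---to conclude existence of $V$ in one step. The paper instead compares $\ket{\phi_1}$ with the \emph{minimal} purification $\ket{\psi}_{AB}$ (purifying space $\mc H_B$ only), obtaining an isometry $R:\mc H_B\to\mc H_B\otimes\mc H_E$ with $\ket{\phi_1} = (I_A\otimes R)\ket{\psi}_{AB}$, and then explicitly constructs $V$ as any unitary mapping the orthonormal set $\{R\ket{\tilde n}_B\}$ onto $\{\ket{\tilde n}_B\otimes\ket{\tilde\chi}_E\}$. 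Your argument is shorter and is a clean existence proof; the paper's detour buys an explicit partial specification of $V$ (Eqs.~(\ref{VR2})--(\ref{VR3})), which is what makes visible that $V$ is pinned down only on a $d$-dimensional subspace of the $d\,d_E$-dimensional space $\mc H_B\otimes\mc H_E$---the non-uniqueness that the paper discusses immediately after the proposition and that motivates Theorem~\ref{thm_reversal2}. Your closing remark correctly identifies a further, distinct source of freedom (the arbitrariness of $\ket{\tilde\chi}_E$), but note that even for fixed $\ket{\tilde\chi}_E$ the reverser is non-unique, which is the point the paper's construction emphasizes.
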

\begin{proof}
  Recall the $\ket{\phi_1}$ defined by Eq.~(\ref{pur1}).  Since
\begin{align}
\trace_{BE}\bk{\ket{\phi_1}\bra{\phi_1}} &= \mc F \sigma = \sigma,
\end{align}
$\ket{\phi_1}$ is a purification of $\sigma$. $\ket{\psi}_{AB}$ is
also a purification of $\sigma$, so the two must be related by an
isometry $R:\mc H_B \to \mc H_B\otimes \mc H_E$ that satisfies
$R^\dagger R = I_B$ as follows \cite{holevo19}:
\begin{align}
\ket{\phi_1} &= I_A \otimes R \ket{\psi}_{AB}.
\end{align}
Now Eq.~(\ref{V2}) is equivalent to
\begin{align}
I_A \otimes V  \ket{\phi_1} &= I_A \otimes V R \ket{\psi}_{AB} 
= \ket{\psi}_{AB} \otimes \ket{\tilde\chi}_E,
\label{VR}
\end{align}
which can be satisfied if 
\begin{align}
V R &= \ket{\tilde\chi}_E,
\label{VR2}
\end{align}
where $\ket{\tilde\chi}_E:\mc H_B\to \mc H_B\otimes \mc H_E$
is a partial ket operator defined by
\begin{align}
\ket{\tilde\chi}_E \ket{\xi}_B &\equiv
\ket{\xi}_B \otimes\ket{\tilde\chi}_E
\quad
\forall \ket{\xi}_B \in \mc H_B.
\end{align}
Eq.~(\ref{VR2}) is equivalent to
\begin{align}
VR \ket{\tilde n}_B &= \ket{\tilde\chi}_E \ket{\tilde n}_B = 
\ket{\tilde n}_B \otimes \ket{\tilde\chi}_E,
\quad
n = 0,\dots,d-1.
\label{VR3}
\end{align}
Observe that $\{R\ket{\tilde n}_B \}$ are an orthonormal set with $d$
elements in $\mc H_B\otimes \mc H_E$, and
$\{\ket{\tilde n}_B \otimes \ket{\tilde\chi}_E \}$ are also an
orthonormal set with $d$ elements in $\mc H_B\otimes \mc H_E$. It
follows that a unitary $V$ that maps the former set to the latter set,
thereby satisfying Eqs.~(\ref{VR3}), (\ref{VR2}), (\ref{VR}), and thus
(\ref{V2}), always exists.
\end{proof}

Since the dimension of $\mc H_B\otimes \mc H_E$ is $d \times d_E$, the
requirement on $V$ given by Eq.~(\ref{VR3}) does not uniquely specify
it. Theorem~\ref{thm_reversal2}, to be shown later in this section, is
a more concrete result, demonstrating that the special reversal
condition can be expressed precisely in terms of Kraus operators.

I now prepare for Theorem~\ref{thm_reversal2} by defining the Kraus
operators needed there and presenting a lemma.

\begin{definition}[Kraus operators]
\label{def_kraus}
Define
\begin{align}
\Aboxed{f_j &\equiv \bra{j}_E U \ket{\chi}_E \in \mc O(\mc H_A),}
\label{f}
\\
\Aboxed{g_j &\equiv \bra{j}_E V^\dagger \ket{\tilde\chi}_E \in \mc O(\mc H_B)}
\label{g}
\end{align}
with respect to the same overcomplete system
$\{\ket{j}_E \in \mc H_E:j=0,\dots,D_E-1\}$ of $\mc H_E$ that
satisfies \cite[Definition~2.20]{holevo19}
\begin{align}
\sum_j \ket{j}_E\bra{j} &= I_E.
\label{IE}
\end{align}
$\{f_j\}$ and $\{g_j\}$ are Kraus operators for the $\mc F$ and
$\mc G$ maps defined by Eqs.~(\ref{F}) and (\ref{G}), respectively,
viz.,
\begin{align}
\mc F \rho_A &= \sum_j f_j \rho_A f_j^\dagger,
\label{f2}
\\
\mc G \rho_B &= \sum_j g_j \rho_B g_j^\dagger.
\label{g2}
\end{align}
\end{definition}
Note that the overcomplete system $\{\ket{j}_E\}$ need not be
orthonormal or even linearly independent---it needs only to satisfy
Eq.~(\ref{IE}).  Note also that
$\ket{\xi}_E:\mc H_x \to \mc H_x \otimes \mc H_E$ and its adjoint
$\bra{\xi}_E:\mc H_x \otimes \mc H_E \to \mc H_x$ in Eqs.~(\ref{f})
and (\ref{g}), where $\mc H_x$ depends on the context, should be
regarded as partial ket and bra operators, as reviewed in
Appendix~\ref{app_braket}.

\begin{lemma}[{Ref.~\cite[Appendix~D]{roberts21}}]
\label{lem_Q}
\begin{align}
X\otimes I_B\ket{\psi}_{AB} &= I_A\otimes Y \ket{\psi}_{AB}
\label{XY_psi}
\end{align}
if and only if
\begin{align}
Y &= \mc W \mc Q X,
\label{YWQX}
\end{align}
where $\mc Q:\mc O(\mc H_A)\to \mc O(\mc H_A)$ is a linear
map defined as
\begin{align}
\Aboxed{\mc Q &\equiv \Theta  \mc J\Delta_\sigma^{-1/2},}
\label{Q}
\end{align}
and the so-called modular map $\Delta_\sigma$ is defined as
\begin{align}
\Delta_\sigma X &\equiv \sigma X\sigma^{-1},
\end{align}
such that
\begin{align}
\Delta_\sigma^{-1/2} X &= \sigma^{-1/2} X\sigma^{1/2}.
\label{mod}
\end{align}
\end{lemma}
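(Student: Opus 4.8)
The plan is to convert the vector identity~(\ref{XY_psi}) into a statement about matrix elements and then invert for $Y$. Since $W$ is unitary and $\theta$ is antiunitary, $\{\ket{\tilde n}_B\}$ is an orthonormal basis of $\mc H_B$, so $\{\ket{m}_A\otimes\ket{\tilde k}_B\}$ is an orthonormal basis of $\mc H_A\otimes\mc H_B$ and~(\ref{XY_psi}) holds if and only if the two sides agree on all such components. First I would pair both sides of~(\ref{XY_psi}) with $\ket{m}_A\otimes\ket{\tilde k}_B$ and insert the expansion~(\ref{psi}): by orthonormality of $\{\ket{\tilde n}_B\}$ the left-hand side collapses to $\sqrt{p_k}\,\bra{m}_A X\ket{k}_A$ and the right-hand side collapses to $\sqrt{p_m}\,\bra{\tilde k}_B Y\ket{\tilde m}_B$. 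Thus~(\ref{XY_psi}) is equivalent to $\sqrt{p_k}\,\bra{m}_A X\ket{k}_A = \sqrt{p_m}\,\bra{\tilde k}_B Y\ket{\tilde m}_B$ for all $m,k$.

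Next I would use Lemma~\ref{lem_ntilde} to replace $\bra{\tilde k}_B Y\ket{\tilde m}_B$ by $\bra{m}_A \mc J\Theta\mc W^{-1}Y\ket{k}_A$. Writing $Z\equiv\mc J\Theta\mc W^{-1}Y$, the condition becomes $\bra{m}_A X\ket{k}_A=\sqrt{p_m/p_k}\,\bra{m}_A Z\ket{k}_A$ for all $m,k$. Reading the factor $\sqrt{p_m/p_k}$ as conjugation by $\sigma^{1/2}$ in the eigenbasis~(\ref{sigma}) of $\sigma$ --- i.e.\ using~(\ref{mod}) --- this is precisely the operator identity $X=\sigma^{1/2}Z\sigma^{-1/2}=\Delta_\sigma^{1/2}Z$, equivalently $Z=\Delta_\sigma^{-1/2}X$; full-rankness of $\sigma$ is what guarantees $\sigma^{-1/2}$ and hence $\Delta_\sigma^{-1/2}$ exist.

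Finally I would solve $\mc J\Theta\mc W^{-1}Y=\Delta_\sigma^{-1/2}X$ for $Y$ by peeling off the maps on the left in turn: $\mc J$ is the adjoint map and squares to the identity, $\Theta$ is a conjugation with $\Theta^2=\mc I_A$, and $\mc W$ is invertible, so applying $\mc J$, then $\Theta$, then $\mc W$ gives $Y=\mc W\Theta\mc J\Delta_\sigma^{-1/2}X=\mc W\mc Q X$ with $\mc Q=\Theta\mc J\Delta_\sigma^{-1/2}$ as in~(\ref{Q}). Since every step in the chain is an equivalence, this delivers both directions of the lemma simultaneously. As a consistency check, $\mc Q$ is a composition of the two antilinear maps $\Theta,\mc J$ with the linear map $\Delta_\sigma^{-1/2}$, hence linear, matching the statement; one could alternatively route the argument through Lemma~\ref{lem_connes}, but the direct basis expansion is the most economical.

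The computation is essentially bookkeeping; the one spot demanding care is the antiunitary $\theta$ --- getting the index placement right when invoking Lemma~\ref{lem_ntilde}, and tracking the order and the linear/antilinear character of $\Theta$, $\mc J$, and $\mc W^{-1}$ when inverting the relation. I expect that, rather than any conceptual difficulty, to be the main obstacle.
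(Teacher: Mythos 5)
Your proposal is correct and follows essentially the same route as the paper: expand both sides in the Schmidt basis of $\ket{\psi}_{AB}$, convert to matrix elements, apply Lemma~\ref{lem_ntilde}, and recognize the ratio $\sqrt{p_m/p_k}$ as the modular map $\Delta_\sigma^{\pm 1/2}$. The only (valid) difference is that you obtain the ``if'' direction by noting every step is reversible, whereas the paper verifies the converse by a separate direct computation of the matrix elements.
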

\begin{proof}
To prove the ``only if'' part, start from Eq.~(\ref{XY_psi})
and write
\begin{align}
\sum_n \sqrt{p_n} X \ket{n}_A\otimes\ket{\tilde n}_B
&= 
\sum_m \sqrt{p_m} \ket{m}_A\otimes Y\ket{\tilde m}_B,
&
(\textrm{by Eq.~(\ref{psi})})
\\
\sqrt{p_n}\bra{m}_A X\ket{n}_A &= 
\sqrt{p_m} \bra{\tilde n}_B Y \ket{\tilde m}_B
= \sqrt{p_m}\bra{m}_A \mc J\Theta \mc W^{-1} Y \ket{n}_A,
&
(\textrm{by Lemma~\ref{lem_ntilde}})
\\
X \sigma^{1/2} &= \sigma^{1/2} \mc J\Theta \mc W^{-1} Y ,
\end{align}
which leads to Eq.~(\ref{YWQX}). To prove the ``if'' part, it suffices
to show that Eq.~(\ref{YWQX}) leads to
\begin{align}
\bra{n}_A\otimes \bra{\tilde m}_B X\otimes I_B\ket{\psi}_{AB}
&= \bra{n}_A\otimes \bra{\tilde m}_B I_A\otimes Y\ket{\psi}_{AB}
\quad
\forall n,m.
\label{prop_XY_if}
\end{align}
Consider the right-hand side first:
\begin{align}
&\quad 
\bra{n}_A\otimes \bra{\tilde m}_B
I_A\otimes Y\ket{\psi}_{AB}
\nonumber\\
&= 
\bra{n}_A\otimes \bra{\tilde m}_B
I_A\otimes \mc W\Theta\mc J \Delta_\sigma^{-1/2}X\ket{\psi}_{AB}
&
(\textrm{by Eqs.~(\ref{YWQX}) and (\ref{Q})})
\\
&=\bra{n}_A\otimes \bra{\tilde m}_B
I_A\otimes \mc W\Theta\mc J \Delta_\sigma^{-1/2} X
\sum_l\sqrt{p_l} \ket{l}_A\otimes\ket*{\tilde l}_B
&
(\textrm{by Eq.~(\ref{psi})})
\\
&= \sqrt{p_n}\bra{\tilde m}_B 
\mc W\Theta\mc J \Delta_\sigma^{-1/2}X \ket{\tilde n}_B
\\
&= \sqrt{p_n}\bra{n}_A \Delta_\sigma^{-1/2} X \ket{m}_A
&
(\textrm{by Lemma~\ref{lem_ntilde}})
\\
&= \sqrt{p_n}\bra{n}_A \sigma^{-1/2} X \sigma^{1/2} \ket{m}_A
&
(\textrm{by Eq.~(\ref{mod})})
\\
&= \sqrt{p_m} \bra{n}_A X \ket{m}_A.
&
(\textrm{by Eq.~(\ref{sigma})})
\label{XY_step2}
\end{align}
By similar steps, the left-hand side of Eq.~(\ref{prop_XY_if}) is
given by
\begin{align}
\bra{n}_A\otimes \bra{\tilde m}_B X\otimes I_B \ket{\psi}_{AB}
&= \sqrt{p_m} \bra{n}_A X\ket{m}_A,
\label{XY_step3}
\end{align}
which is equal to Eq.~(\ref{XY_step2}), and the desired equality given
by Eq.~(\ref{prop_XY_if}) is proved.
\end{proof}

\begin{theorem}
\label{thm_reversal2}
The special reversal condition is satisfied if and only if the Kraus
operators $\{f_j\}$ and $\{g_j\}$ in Def.~\ref{def_kraus} are related
by
\begin{align}
\Aboxed{g_j &= \mc W \mc Q f_j,}
\label{gf}
\end{align}
where $\mc Q$ is the linear map defined by Eq.~(\ref{Q}).  If
Eq.~(\ref{gf}) holds for Kraus operators defined with respect to one
overcomplete system $\{\ket{j}_E\}$ as per Def.~\ref{def_kraus}, then
it holds for Kraus operators in terms of any overcomplete system of
$\mc H_E$.
\end{theorem}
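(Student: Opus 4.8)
The plan is to reduce the special reversal condition to the identity $\ket{\phi_1} = \ket{\phi_2}$ between the purifications defined in Eqs.~(\ref{pur1})--(\ref{pur2}) (exactly the vectors used in the proof of Lemma~\ref{lem_FG}), then split that identity into its ``components'' against the overcomplete system $\{\ket{j}_E\}$, and finally apply Lemma~\ref{lem_Q} componentwise. First I would observe that, since $I_A\otimes V$ is unitary, acting with $I_A\otimes V^\dagger$ on both sides of Eq.~(\ref{V2}) turns it into the \emph{equivalent} statement $\ket{\phi_1} = \ket{\phi_2}$.

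Next I would expand $\ket{\psi}_{AB}$ via Eq.~(\ref{psi}) and use the partial bra/ket calculus (Appendix~\ref{app_braket}) together with the definitions of $f_j$ and $g_j$ in Def.~\ref{def_kraus} to compute, for every index $j$,
\begin{align}
(I_{AB}\otimes\bra{j}_E)\ket{\phi_1} &= (f_j\otimes I_B)\ket{\psi}_{AB},
&
(I_{AB}\otimes\bra{j}_E)\ket{\phi_2} &= (I_A\otimes g_j)\ket{\psi}_{AB}.
\end{align}
I would then argue that $\ket{\phi_1} = \ket{\phi_2}$ holds if and only if $(I_{AB}\otimes\bra{j}_E)\ket{\phi_1} = (I_{AB}\otimes\bra{j}_E)\ket{\phi_2}$ for all $j$: the ``only if'' is obtained by applying the linear map $I_{AB}\otimes\bra{j}_E$ to both sides, and the ``if'' follows from the completeness relation (\ref{IE}), since any $\ket{\Psi}\in\mc H_A\otimes\mc H_B\otimes\mc H_E$ satisfies $\ket{\Psi} = \bk{I_{AB}\otimes\sum_j\ket{j}_E\bra{j}_E}\ket{\Psi}$ and hence vanishes once all $(I_{AB}\otimes\bra{j}_E)\ket{\Psi}$ vanish. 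Chaining these equivalences, Eq.~(\ref{V2}) is equivalent to $(f_j\otimes I_B)\ket{\psi}_{AB} = (I_A\otimes g_j)\ket{\psi}_{AB}$ for all $j$, which by Lemma~\ref{lem_Q} (with $X = f_j$, $Y = g_j$) is equivalent to Eq.~(\ref{gf}), with $\mc Q$ the map in Eq.~(\ref{Q}).

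The final sentence on basis independence then follows at once: the special reversal condition, Eq.~(\ref{V2}), makes no reference whatsoever to the overcomplete system $\{\ket{j}_E\}$, so if Eq.~(\ref{gf}) holds for the Kraus operators built from one such system, the equivalence just proved gives that Eq.~(\ref{V2}) holds, and invoking the equivalence again for any other overcomplete system yields Eq.~(\ref{gf}) for the corresponding Kraus operators.

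I expect the only delicate point to be the step that testing $\ket{\phi_1}=\ket{\phi_2}$ against $\{\ket{j}_E\}$ loses no information. Because the overcomplete system need not be orthonormal or even linearly independent, there is no dual basis to extract genuine components, so one must route the ``if'' direction through the resolution of the identity (\ref{IE}) rather than through any naive componentwise argument; this is also what makes the basis-independence claim nontrivial and is precisely what the equivalence reformulation sidesteps. Everything else -- the evaluation of $(I_{AB}\otimes\bra{j}_E)\ket{\phi_i}$ and the invocation of Lemma~\ref{lem_Q} -- is routine bookkeeping in the partial bra/ket notation.
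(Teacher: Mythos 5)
Your proof is correct and follows essentially the same route as the paper: rewrite Eq.~(\ref{V2}) in the form $U\otimes I_B\ket{\psi}_{AB}\otimes\ket{\chi}_E = I_A\otimes V^\dagger\ket{\psi}_{AB}\otimes\ket{\tilde\chi}_E$, hit it with $\bra{j}_E$ to get $(f_j\otimes I_B)\ket{\psi}_{AB}=(I_A\otimes g_j)\ket{\psi}_{AB}$, recover the original equation via the resolution of the identity (\ref{IE}) for the converse, and invoke Lemma~\ref{lem_Q} componentwise. The one place you genuinely diverge is the basis-independence claim: the paper computes the change-of-system matrix $c_{jk}=\braket{j'}{k}$ explicitly and uses the linearity of $\mc W\mc Q$ to transport Eq.~(\ref{gf}) from one Kraus family to the other, whereas you route through the coordinate-free condition (\ref{V2}) by applying the already-established equivalence once in each direction; both are valid, yours is slightly slicker but leans on the equivalence having been proved for an arbitrary overcomplete system, while the paper's version isolates the structural reason (linearity of $\mc Q$ despite its antilinear factors $\Theta$ and $\mc J$) that makes the statement true.
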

\begin{proof}
To prove the ``only if'' part, rewrite Eq.~(\ref{V2}) as 
\begin{align}
U \otimes I_B \ket{\psi}_{AB} \otimes \ket{\chi}_E &= 
I_A \otimes V^\dagger   \ket{\psi}_{AB}\otimes \ket{\tilde\chi}_E
\label{V3}
\end{align}
and apply $\bra{j}_E$ on both sides to obtain
\begin{align}
f_j \otimes I_B \ket{\psi}_{AB} &= I_A \otimes g_j \ket{\psi}_{AB},
&
(\textrm{by Def.~\ref{def_kraus}})
\label{fg_step}
\end{align}
which implies Eq.~(\ref{gf}) by Lemma~\ref{lem_Q}. To prove the ``if''
part, note that Eq.~(\ref{gf}) also implies Eq.~(\ref{fg_step}) by
Lemma~\ref{lem_Q}.  Then apply $\sum_j \ket{j}_E$ on both sides of
Eq.~(\ref{fg_step}) to obtain Eq.~(\ref{V3}), which is equivalent to
the special reversal condition.

To prove the invariance of the condition to the assumed overcomplete
system, let $\{f_j'\}$ and $\{g_j'\}$ be the Kraus operators defined
with respect to another overcomplete system
$\{\ket{j'}_E:j = 0,\dots,D_E'-1\}$.  The new system is related to the
old system by
\begin{align}
\bra{j'}_E = \bra{j'}_E \sum_k \ket{k}_E\bra{k} = \sum_k c_{jk} \bra{k}_E,
\end{align}
where $c_{jk} = \braket{j'}{k}$, and the Kraus operators are
similarly related by
\begin{align}
f_j' &= \sum_k c_{jk} f_k,
&
g_j' &= \sum_k c_{jk} g_k.
\end{align}
Now observe that the map $\mc Q$ defined by Eq.~(\ref{Q})
is a linear map, since $\Delta_\sigma^{-1/2}$ 
is linear while $\Theta$ and $\mc J$ are
antilinear.  It follows that $\mc W\mc Q$ is also linear,
and Eq.~(\ref{gf}) implies
\begin{align}
g_j' &= \sum_k c_{jk} g_k
= \sum_k c_{jk} \mc W \mc Q f_k
= \mc W \mc Q\sum_k c_{jk} f_k = \mc W \mc Q f_j',
\end{align}
which is Eq.~(\ref{gf}) in terms of the new Kraus operators.
\end{proof}
Eq.~(\ref{gf}) in Theorem~\ref{thm_reversal2} is a necessary and
sufficient condition for system B to be a special reverser---any
special reverser must obey Eq.~(\ref{gf}), and any system B that obeys
it is a special reverser. 

Notice that $\{\mc Q f_j\}$ are a set of Kraus operators for
$\Theta \mc F^\Petz \Theta$. It follows that $\{\mc W\mc Q f_j\}$ are
a set of Kraus operators for
$\mc W\Theta \mc F^\Petz\Theta\mc W^{-1}$, the right-hand side of
Eq.~(\ref{G_reversal}) in
Theorem~\ref{thm_reversal}. Eq.~(\ref{G_reversal}) in
Theorem~\ref{thm_reversal} is an equality between two maps $\mc G$ and
$\mc W\Theta \mc F^\Petz\Theta\mc W^{-1}$, implying only that their
Kraus operators $\{g_j\}$ and $\{\mc W\mc Q f_j\}$ are related by a
partially isometric matrix \cite{holevo19}.
Theorem~\ref{thm_reversal2}, on the other hand, is a special case of
Theorem~\ref{thm_reversal}, as Eq.~(\ref{gf}) in
Theorem~\ref{thm_reversal2} is an equality between individual Kraus
operators.

The SQDB-$\theta$ condition discussed in Sec.~\ref{sec_DB} can also
simplify the special reversal condition. I need the following lemma
first.
\begin{lemma}
\label{lem_DB_kraus}
If Eq.~(\ref{DB_petz}) is satisfied, there exists a partially
isometric matrix $c \in \mb C^{D_E\times D_E}$ such that
\begin{align}
\mc Q  f_j &= \sum_k c_{jk} f_k.
\label{DB_kraus}
\end{align}
Furthermore, if the set $\{f_j\}$ is linearly independent, then
$\{Q f_j\}$ is also linearly independent and $c$ is
unitary. Conversely, if $\{f_j\}$ satisfies Eq.~(\ref{DB_kraus}) with
a unitary $c$, then Eq.~(\ref{DB_petz}) holds.
\end{lemma}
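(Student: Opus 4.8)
The plan is to connect the SQDB-$\theta$ condition for $\mc F$, in the Petz form of Eq.~(\ref{DB_petz}), to a statement purely about Kraus operators by using the fact that two CPTP maps are equal if and only if their Kraus operators are related by a partially isometric matrix. First I would note that $\{\mc Q f_j\}$ is a set of Kraus operators for $\Theta\mc F^\Petz\Theta$: indeed $\{f_j^\Petz\} := \{\mc E_\sigma^{1/2}\, ?\}$—more directly, from Eq.~(\ref{petz_exp}) and the explicit form $\mc Q = \Theta\mc J\Delta_\sigma^{-1/2}$ in Eq.~(\ref{Q}), one checks that $\mc Q f_j = \theta\,\sigma^{-1/2}f_j^\dagger\sigma^{1/2}\,\theta$ (absorbing the conjugation), and a direct computation gives $\sum_j (\mc Q f_j)(\cdot)(\mc Q f_j)^\dagger = \Theta\mc F^\Petz\Theta(\cdot)$. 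This is exactly the observation already made in the text right after Theorem~\ref{thm_reversal2}, so I would simply cite it. Meanwhile $\{f_j\}$ is by construction a set of Kraus operators for $\mc F$. Hence Eq.~(\ref{DB_petz}), $\mc F = \Theta\mc F^\Petz\Theta$, is precisely the statement that the two Kraus families $\{f_j\}$ and $\{\mc Q f_j\}$ represent the same CPTP map.

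Next I would invoke the standard unitary/partial-isometry freedom theorem for Kraus representations \cite{holevo19}: two families $\{A_j\}_{j=0}^{D-1}$ and $\{B_j\}_{j=0}^{D-1}$ (padded to equal length by zero operators if necessary) represent the same channel if and only if $B_j = \sum_k c_{jk} A_k$ for some matrix $c$ with $c^\dagger c$ equal to the projector onto the span of the index set carrying the nonzero $A_k$'s—i.e.\ $c$ partially isometric. Applying this with $A_k = f_k$ and $B_j = \mc Q f_j$ yields Eq.~(\ref{DB_kraus}) with $c$ partially isometric, proving the first assertion. For the "furthermore" part, if $\{f_j\}$ is linearly independent then the relevant span is all of $\mb C^{D_E}$, so the projector $c^\dagger c = I$ and $c$ is unitary; and since $\mc W\mc Q$—hence $\mc Q$—is a linear bijection on $\mc O(\mc H_A)$ (as noted in the proof of Theorem~\ref{thm_reversal2}, $\Delta_\sigma^{-1/2}$ is invertible and $\Theta,\mc J$ are bijective conjugations), it maps the linearly independent set $\{f_j\}$ to a linearly independent set $\{\mc Q f_j\}$. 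For the converse, if Eq.~(\ref{DB_kraus}) holds with $c$ unitary, then $\{\mc Q f_j\}$ and $\{f_j\}$ are Kraus families related by a unitary matrix, hence represent the same channel; since those channels are $\Theta\mc F^\Petz\Theta$ and $\mc F$ respectively, we get $\mc F = \Theta\mc F^\Petz\Theta$, which is Eq.~(\ref{DB_petz}).

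The only mild subtlety—and the step I expect to require the most care—is bookkeeping about the number of Kraus operators: the families $\{f_j\}$ and $\{\mc Q f_j\}$ are both indexed by the same overcomplete system $\{\ket{j}_E\}$, so they automatically have the same cardinality $D_E$, and $c$ is naturally a $D_E\times D_E$ matrix as claimed; no padding is needed. One should also be slightly careful that "partially isometric" ($c^\dagger c$ a projector) is the correct general statement when $\{f_j\}$ is not independent, reserving "unitary" for the independent case—this is precisely the dichotomy stated in the lemma, and it falls straight out of the Kraus-freedom theorem. Everything else is routine substitution, and I would keep the written proof to essentially the three moves above: (i) identify the two Kraus families and the channels they represent, (ii) apply the Kraus-freedom theorem in both directions, (iii) upgrade partial isometry to unitarity under linear independence using the invertibility of $\mc Q$.
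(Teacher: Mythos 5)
Your proposal is correct and follows essentially the same route as the paper: both identify $\{\mc Q f_j\}$ as Kraus operators for $\Theta\mc F^\Petz\Theta$, invoke the standard partial-isometry/unitary freedom of Kraus representations (the paper cites Exercises~6.14 and 6.15 of Ref.~\cite{holevo19}), use the invertibility of $\mc Q$ to transfer linear independence from $\{f_j\}$ to $\{\mc Q f_j\}$, and establish the converse by noting that a unitary $c$ makes the two families represent the same channel (the paper just writes this out as a short explicit computation). No gaps.
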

\begin{proof}
  Eq.~(\ref{DB_petz}) implies the existence of a partially isometric
  $c$ that satisfies Eq.~(\ref{DB_kraus}) by
  Ref.~\cite[Exercise~6.15]{holevo19}.  Now assume that $\{f_j\}$ is
  linearly independent. The linear independence of $\{\mc Q f_j\}$ can
  be proved by contradiction: suppose that $\{\mc Q f_j\}$ is
  linearly dependent. Then there exists a nonzero $a \in \mb C^{D_E}$
  such that $\sum_j a_j \mc Q f_j = 0$. As the $\mc Q$ defined by
  Eq.~(\ref{Q}) is invertible, one can apply $\mc Q^{-1}$ on both
  sides and obtain $\sum_j a_j f_j = 0$, which contradicts the linear
  independence of $\{f_j\}$.  By Ref.~\cite[Exercises~6.14 and
  6.15]{holevo19}, $c$ is unitary if $\{f_j\}$ and $\{\mc Q f_j\}$ are
  both linearly independent.

  The converse part is proved as follows:
\begin{align}
\mc F \rho_A &= \sum_j f_j \rho_A f_j^\dagger
&
(\textrm{by Eq.~(\ref{f2})})
\\
&= \sum_{j,k,l} c_{jk}c_{jl}^* 
(\mc Q f_k) \rho_A  (\mc Q f_l)^\dagger
&
(\textrm{by Eq.~(\ref{DB_kraus})})
\\
&= \sum_k (\mc Q f_k) \rho_A  (\mc Q f_k)^\dagger
&
(\textrm{isometry of $c$})
\\
&= \Theta \mc F^\Petz \Theta \rho_A.
\end{align}
\end{proof}
The matrix $c$ in Lemma~\ref{lem_DB_kraus} needs to be solved on a
case-by-case basis using more details about system A, but once it has
been found, it can be used to simplify Theorem~\ref{thm_reversal2}.
\begin{corollary}
  Suppose that the Kraus operators of the system-A channel
  obey Eq.~(\ref{DB_kraus}) for a certain matrix $c$. Then the special
  reversal condition is satisfied if and only if
\begin{align}
g_j &= \sum_k c_{jk} \mc W f_k.
\end{align}
\end{corollary}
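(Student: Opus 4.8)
The plan is to obtain the corollary as an immediate consequence of Theorem~\ref{thm_reversal2} together with the hypothesis Eq.~(\ref{DB_kraus}), using only the linearity of the map $\mc W$. First I would recall that Theorem~\ref{thm_reversal2} asserts that the special reversal condition holds if and only if the Kraus operators of Def.~\ref{def_kraus}, defined with respect to a common overcomplete system $\{\ket{j}_E\}$, satisfy $g_j = \mc W\mc Q f_j$; by the invariance clause of that theorem one is free to work with whichever overcomplete system is convenient, in particular the one for which the assumed relation $\mc Q f_j = \sum_k c_{jk} f_k$ of Eq.~(\ref{DB_kraus}) is posited.

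The main step is then a direct substitution. Since $\mc W X = W X W^\dagger$ is linear (indeed unitary, by Prop.~\ref{prop_maps}), it commutes with the finite sum over $k$, so that $\mc W\mc Q f_j = \mc W\bk{\sum_k c_{jk} f_k} = \sum_k c_{jk}\,\mc W f_k$. Hence the condition $g_j = \mc W\mc Q f_j$ appearing in Theorem~\ref{thm_reversal2} is, under the hypothesis Eq.~(\ref{DB_kraus}), literally the same as $g_j = \sum_k c_{jk}\,\mc W f_k$, and the two statements ``the special reversal condition holds'' and ``$g_j = \sum_k c_{jk}\,\mc W f_k$'' are therefore equivalent.

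I do not anticipate any real obstacle here: the proof is a one-line rewriting of Theorem~\ref{thm_reversal2}. The only points that merit a moment's care are (i) ensuring that the Kraus operators $\{f_j\}$ in the hypothesis and the $\{g_j\}$ in the conclusion are referred to the \emph{same} overcomplete system, which is built into Def.~\ref{def_kraus} and preserved by the invariance clause of Theorem~\ref{thm_reversal2}, and (ii) noting explicitly that $\mc W$ is linear so that it may be pulled through the sum, exactly as in the linearity argument already used in the proof of Theorem~\ref{thm_reversal2}.
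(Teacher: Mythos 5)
Your proposal is correct and is essentially identical to the paper's own (one-line) proof, which simply plugs Eq.~(\ref{DB_kraus}) into Eq.~(\ref{gf}) of Theorem~\ref{thm_reversal2}. Your additional remarks about the linearity of $\mc W$ and the consistency of the overcomplete system are accurate elaborations of the same argument rather than a different route.
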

\begin{proof}
Plug Eq.~(\ref{DB_kraus}) into Eq.~(\ref{gf}).
\end{proof}

\section{\label{sec_exa}Examples}
Some examples to illustrate Theorem~\ref{thm_reversal2} are in order.

\subsection{\label{sec_random_unitary}Random unitary channel}
Suppose that the interaction between system A and the field is modeled
by the unitary operator
\begin{align}
U &= \exp[-i (H\otimes I_E + X \otimes Y) t ],
\label{exa_U}
\end{align}
where $t \in \mb R$, $H \in \mc O(\mc H_A)$ is the internal
Hamiltonian of system A, $X \in \mc O(\mc H_A)$ and
$Y \in \mc O(\mc H_E)$ are self-adjoint operators, and
$X\otimes Y$ models the coupling between system A and the
field. Assuming that each $\ket{j}_E$ is an eignevector of $Y$ with
eigenvalue $\lambda_j$, the system-A Kraus operators in
Def.~\ref{def_kraus} become
\begin{align}
f_j &\equiv \bra{j}_E U \ket{\chi}_E = \chi_j u_j,
\quad
j = 0,\dots,d_E - 1,
&
\chi_j &\equiv \braket{j}{\chi}_E,
&
u_j &\equiv \exp[-i (H +\lambda_j X )t].
\end{align}
It follows that the CPTP map $\mc F$ is a random unitary channel given
by
\begin{align}
\mc F \rho_A &= \sum_j f_j \rho_A f_j^\dagger = \sum_j |\chi_j|^2 u_j \rho_A u_j^\dagger.
\end{align}
Assume a steady state of $\mc F$ given by
\begin{align}
\sigma &= \frac{I_A}{d}
\label{exa_sigma}
\end{align}
and an orthonormal basis $\{\ket{n}_A\}$ that obeys
\begin{align}
\theta \ket{n}_A &= \ket{n}_A
\end{align}
to specify the conjugation $\theta$. For system B to be a special
reverser, its Kraus operators as per Def.~\ref{def_kraus} should
satisfy Eq.~(\ref{gf}) in Theorem~\ref{thm_reversal2}, which leads to
\begin{align}
g_j &\equiv \bra{j}_E V^\dagger \ket{\tilde\chi}_E = \mc W \mc Q f_j = 
\mc W \Theta \mc J f_j
= \chi_j \sum_{n,m} \bra{m}_A u_j \ket{n}_A \ket{\tilde n}_B \bra{\tilde m}.
\label{g_exa}
\end{align}
The effect of $\mc W\Theta\mc J$ is to make the $g_j$ matrix with
respect to the $\{\ket{\tilde n}_B\}$ basis, denoted as
$[\bra{\tilde n}_B g_j\ket{\tilde m}_B]$, equal to the transpose of
the $f_j$ matrix $[\bra{n}_A f_j\ket{m}_A]$. The
system-B CPTP map $\mc G$ becomes
\begin{align}
\mc G \rho_B &= \sum_j g_j \rho_B  g_j^\dagger = 
\sum_j |\chi_j|^2 v_j^\dagger \rho_B v_j,
&
v_j &\equiv \mc W \Theta u_j
=
\sum_{n,m} \bk{\bra{n}_A u_j \ket{m}_A}^* \ket{\tilde n}_B \bra{\tilde m},
\end{align}
which is also a random unitary channel. Assuming
\begin{align}
\ket{\tilde\chi}_E &= \exp(-ic t)\ket{\chi}_E,
\quad
c \in \mb R,
\end{align}
one way of implementing the desired system-B Kraus operators is to
make
\begin{align}
V &= e^{-ict} \sum_j v_j \otimes \ket{j}_E\bra{j},
\\
V &= \exp[-i (\tilde H\otimes I_E+ \tilde X \otimes Y) t],
&
\tilde H &= -\mc W \Theta H + c,
&
\tilde X &= -\mc W\Theta X.
\label{exa_V}
\end{align}
The use of a negative-mass oscillator for backaction-noise
cancellation
\cite{hammerer,qnc,qmfs,khalili18,moeller17,junker22,jia23} can then
be seen as a special case when the system-A Hamiltonian satisfies
time-reversal symmetry in the sense of $\Theta H = H$ and
$\tilde H = -\mc W H + c$, meaning that the system-B Hamiltonian
matrix $[\bra{\tilde n}_B\tilde H\ket{\tilde m}_B]$ is the negative of
the $H$ matrix $[\bra{n}_A H \ket{m}_A]$ (ignoring the mathematical
complication of applying the theory here to the infinite-dimensional
systems considered in
Refs.~\cite{hammerer,qnc,qmfs,khalili18,moeller17,junker22,jia23}).

Under the special reversal condition, the steady state of systems A
and B becomes the maximally entangled state
\begin{align}
\ket{\psi}_{AB} &= \frac{1}{\sqrt{d}} \sum_n \ket{n}_A \otimes\ket{\tilde n}_B,
\end{align}
because the system-A steady state given by Eq.~(\ref{exa_sigma}) is
completely mixed. The bases $\{\ket{n}_A\}$ and
$\{\ket{\tilde n}_B = W\theta\ket{n}_A = W \ket{n}_A\}$ are determined
by the chosen $\theta$ and $W$ operators and the system-B dynamics
based on them. For example, let $\{\ket{0}_B,\dots,\ket{d-1}_B\}$
without the tilde be a convenient orthonormal basis of $\mc H_B$
and let $W_1$ and $W_2$ be two examples of $W$ that give
\begin{align}
\ket{\tilde n}_B &= W_1 \theta \ket{n}_A = W_1 \ket{n}_A = \ket{n}_B,
&
\ket{\tilde n}_B &= W_2 \theta\ket{n}_A = W_2 \ket{n}_A = \ket{d-1-n}_B.
\end{align}
Assume further a system-A Hamiltonian given by
\begin{align}
H &= \omega \sum_n n \ket{n}_A \bra{n},
\quad
\omega \in \mb R.
\end{align}
Then the two choices of $W$ lead to two different system-B
Hamiltonians given by
\begin{align}
\tilde H &= -\mc W_1\Theta H + c = c - \omega \sum_n n \ket{n}_B \bra{n},
&
\tilde H &= -\mc W_2 \Theta H + c = c' + \omega \sum_n n \ket{n}_B \bra{n},
\end{align}
where $\mc W_j A \equiv W_j A W_j^\dagger$ and
$c' \equiv c - \omega (d-1)$. In practice, one can choose a $W\theta$
operator either to achieve a desired steady state $\ket{\psi}_{AB}$ or
to make the system-B dynamics easier to implement.

I emphasize that, by virtue of Theorem~\ref{thm_reversal2}, any system
B that implements the Kraus operators given by Eq.~(\ref{gf}) is a
special reverser, so there is considerable freedom in the system
design; Eqs.~(\ref{exa_V}) are simply an obvious choice for this
example. Another possibility is
\begin{align}
V &= \exp[-i (\tilde H\otimes I_E+ \tilde X \otimes Y) \tau],
&
\tilde H &= r\bk{- \mc W \Theta H + c},
&
\tilde X &= -r\mc W\Theta X,
&
r\tau  &= t,
\end{align}
so that the time $\tau \in \mb R$ or the parameter $r \in \mb R$ may
vary as long as $r\tau = t$. More generally, system B need not have a
time-independent Hamiltonian; it may also be a programmable quantum
processor with a time-dependent Hamiltonian to implement
Eq.~(\ref{gf}).

\subsection{\label{sec_semigroup}Continuous-time Markov model}
Suppose that each temporal field mode comprises $J$ spatial modes
labeled by $j = 1,2,\dots,J$, the interaction between system A and
each temporal field mode occurs in infinitesimal time $dt$, and $U$
can be expressed in the form
\begin{align}
U &= I_{AE} + dt G\otimes I_E + \sqrt{dt} 
\sum_{j=1}^J \bk{L_j \otimes a_j^\dagger
- L_j^\dagger \otimes a_j} + o(dt),
\label{U_markov}
\\
G &= -i H -\frac{1}{2} \sum_{j=1}^J L_j^\dagger L_j,
\quad
H = \frac{i}{2} \bk{G-G^\dagger},
\end{align}
where $H \in \mc O(\mc H_A)$ is the internal Hamiltonian of system A,
each $L_j \in \mc O(\mc H_A)$ is the jump operator associated with the
interaction between system A and the $j$th spatial mode,
$a_j \in \mc O(\mc H_E)$ is the annihilation operator for the $j$th
spatial mode, $a_j^\dagger$ is the creation operator, and $o(dt)$
denotes terms that are asymptotically negligible relative to
$dt$. Suppose further that $\ket{0}_E$ denotes the vacuum state for
all $J$ spatial modes, $\ket{j}_E = a_j^\dagger \ket{0}_E$ for
$j= 1,\dots,J$, and $\ket{\chi}_E = \ket{0}_E$.  Then the system-A
Kraus operators in Def.~\ref{def_kraus} can be expressed as
\begin{align}
f_0 &\equiv \bra{0}_E U \ket{\chi}_E = I_A + G dt + o(dt),
\label{f0_markov}
\\
f_j &\equiv \bra{j}_E U \ket{\chi}_E = L_j \sqrt{dt} + o(dt),
\quad
j= 1,\dots,J.
\label{fj_markov}
\end{align}
After sequential interactions with $n$ temporal field modes, the
evolution of system A can be modeled by the CPTP map $\mc F^n$.  A
continuous-time limit with $t = n dt$ can be taken such that $\mc F^n$
becomes a quantum Markov semigroup given by
\begin{align}
\mc F_t &= \exp(\mc L t), \quad t \ge 0,
\end{align}
where $\mc F_t: \mc O(\mc H_A) \to \mc O(\mc H_A)$ for each $t$ is a
CPTP map and $\mc L:\mc O(\mc H_A)\to \mc O(\mc H_A)$ is the semigroup
generator \cite{gardiner_zoller}.  The
Gorini-Kossakowski-Sudarshan-Lindblad (GKSL) form of $\mc L$
\cite{gorini76,lindblad76} becomes
\begin{align}
\mc L \rho &= G \rho + \rho G^\dagger + \sum_{j=1}^J L_j \rho L_j^\dagger.
\end{align}
According to Theorem~\ref{thm_reversal2}, for system B to be a special
reverser, its Kraus operators as per Def.~\ref{def_kraus} must satisfy
\begin{align}
g_0 &\equiv \bra{0}_E V^\dagger \ket{\tilde\chi}_E
= \mc W \mc Q f_0 = I_B + \mc W \mc Q G dt + o(dt),
\\
g_j &\equiv \bra{j}_E V^\dagger \ket{\tilde\chi}_E 
= \mc W \mc Q f_j = \mc W \mc Q L_j \sqrt{dt}+ o(dt),
\quad
j= 1,\dots,J.
\end{align}
Now assume also a continuous-time Markov model for system B, such that
the unitary operator $V$ of system B can be expressed in terms of its
internal Hamiltonian $\tilde H \in \mc O(\mc H_B)$ and jump operators
$\tilde L_1,\dots,\tilde L_J \in \mc O(\mc H_B)$ as
\begin{align}
V &=I_{BE} + dt \tilde G \otimes I_E  + 
\sqrt{dt} \sum_{j=1}^J\bk{\tilde L_j \otimes a_j^\dagger
- \tilde L_j^\dagger \otimes a_j} + o(dt),
\label{V_markov}
\\
\tilde G &= \tilde H - \frac{1}{2}\sum_{j=1}^J \tilde L_j^\dagger \tilde L_j,
\quad
\tilde H = \frac{i}{2} \bk{\tilde G - \tilde G^\dagger}.
\end{align}
Assuming 
\begin{align}
\ket{\tilde\chi}_E = \exp(-i c dt) \ket{0}_E, \quad c \in \mb R,
\end{align}
the desired system-B operators become
\begin{align}
\tilde H &= \frac{i}{2} \bk{\mc J \mc W \mc Q G - \mc W \mc Q G} + c,
&
\tilde L_j &=-\mc W \mc Q L_j.
\label{reversal_GKSL}
\end{align}
Eqs.~(\ref{reversal_GKSL}) are the same as Ref.~\cite[Eqs.~(6) and
(7)]{stannigel12} and Ref.~\cite[Eqs.~(32)]{yang23}.  Note that
Theorem~\ref{thm_reversal2} is much more general, as it allows $U$ to
be an arbitrary unitary, whereas the continuous-time Markov model in
Refs.~\cite{stannigel12,roberts20,roberts21,yang23} requires $U$ to
follow the form of Eq.~(\ref{U_markov}), which imposes stringent
requirements on the field dynamics and the system-field interactions.
For example, Eq.~(\ref{U_markov}) is unable to model cavities for the
field accurately, as the cavity modes often complicate the dynamics.

For the continuous-time model, the SQDB-$\theta$ condition given by
Theorem~\ref{thm_DB} becomes \cite{fagnola10}
\begin{align}
\sigma &= \Theta\sigma,
\label{sigma_invar2}
\\
\mc L &= \Theta \mc L^\Petz \Theta.
\label{DB_gen}
\end{align}
To narrow down the GKSL form for a given semigroup generator $\mc L$,
assume that $\trace(\sigma L_j) = 0$ for all $j$ and
$\{I_A,L_1,\dots,L_J\}$ are linearly independent
\cite{gorini76,parth_qsc}. A GKSL form that satisfies these two
assumptions is called special \cite{fagnola10} (further assumptions
are needed if $\mc H_A$ is infinite-dimensional).  Given
Eq.~(\ref{sigma_invar2}), Eq.~(\ref{DB_gen}) is satisfied if and only
if there exists a special GKSL form of $\mc L$ and a unitary and
involutory matrix $u \in \mb C^{J\times J}$ such that
\cite[Theorem~8]{fagnola10}
\begin{align}
\mc Q G &= G,
&
\mc Q L_j &= \sum_k u_{jk} L_k.
\label{DB_gen2}
\end{align}
These conditions resemble Lemma~\ref{lem_DB_kraus}, although the
precise relation between Eqs.~(\ref{DB_gen2}) and
Lemma~\ref{lem_DB_kraus} is outside the scope of this work.  Given
Eqs.~(\ref{DB_gen2}), the system-B operators in
Eqs.~(\ref{reversal_GKSL}) become
\begin{align}
\tilde H &= -\mc W H + c,
&
\tilde L_j &= -\sum_k u_{jk} \mc W L_k,
\end{align}
which are the same as Ref.~\cite[Eqs.~(54)]{roberts21}.

\section{\label{sec_conclusion}Conclusion}
This work establishes a rigorous theory of quantum reversal by
generalizing the absorber concept and boiling the reversal conditions
down to concise formulas. For future work, it will be important to
work out more examples, explore applications, and study the effect of
decoherence. Of particular interest is the application of the
formalism here to the design of time-reversal-based measurements for
quantum metrology, a problem that has attracted widespread attention
\cite{yang23,godley23,qnc,qmfs,hammerer,khalili18,moeller17,junker22,jia23,yurke86,toscano06,goldstein11,davis16,macri16,li20,colombo22,noise_spec_pra,shi23,gorecki22}.

\section*{Acknowledgments}
I acknowledge inspiring discussions with Dayou Yang, Madalin Guta,
Andrew Lingenfelter, Aash Clerk, Luiz Davidovich, James Gardner,
Yanbei Chen, Rana Adhikari, Arthur Parzygnat, Rocco Duvenhage,
as well as the hospitality of the Kavli Institute for Thereotical
Physics at the University of California-Santa Barbara, the Department
of Physics at the University of Toronto, the Pritzker School of
Molecular Engineering at the University of Chicago, the University of
Warsaw, Caltech, and the Keble College and the Department of Physics
at the University of Oxford, where this work was conceived and
performed. I also thank a referee for suggesting the content in
  Sec.~\ref{sec_semigroup}. This research was supported in part by
the National Research Foundation, Singapore, under its Quantum
Engineering Programme (QEP-P7) and in part by grant NSF PHY-1748958 to
the Kavli Institute for Theoretical Physics (KITP).

\appendix

\section{\label{app_anti}Antiunitary operators and maps}
For clarity, the appendices use the proper notation of an inner
product $\avg{\phi,\psi}$ between two elements $\phi$ and $\psi$ in a
Hilbert space $\mc H$. The inner product is assumed to be antilinear
with respect to the first argument and linear with respect to the
second argument.

An operator $X:\mc H \to \mc H$ is said to be antilinear if
\begin{align}
X(\alpha \psi + \beta \phi)
&= \alpha^* X \psi + \beta^* X \phi
\quad
\forall \psi,\phi \in \mc H, \forall \alpha,\beta \in \mathbb C.
\end{align}
Denote the set of antilinear operators on $\mc H$ as
$\tilde{\mc O}(\mc H)$. The adjoint $X^\dagger$ of an antilinear
operator $X \in \tilde{\mc O}(\mc H)$ is defined by
\begin{align}
\avg{\phi,X\psi} &=\avg{\psi,X^\dagger\phi}
\quad
\forall \phi,\psi \in \mc H.
\end{align}
$X^\dagger$ is also antilinear and $X^{\dagger\dagger} = X$.  The
inverse $X^{-1}$ of an antilinear operator $X$ is defined by
\begin{align}
X^{-1} X = XX^{-1} = I,
\end{align}
where $I$ is the identity operator on $\mc H$.
$X^{-1}$, if it exists, is antilinear.

An operator $T \in \tilde{\mc O}(\mc H)$ is said to be antiunitary if
it obeys
\begin{align}
\avg{T \phi,T\psi} &= \avg{\psi,\phi}
\quad
\forall \phi,\psi.
\end{align}
If $T^2 = I$ in addition, then $T$ is called a conjugation.  I collect
some basic facts about antiunitary operators in the following
proposition; see, for example, Ref.~\cite{parth_qsc}.
\begin{proposition}
\label{prop_antiu}
An antiunitary operator $T$ has the following properties.
\begin{enumerate}
\item $T^\dagger = T^{-1}$, which is antiunitary.
\item Given an orthonormal set $\{e_j\}$ in $\mc H$, $\{T e_j\}$ is another
  orthonormal set.
\item $T$ can always be decomposed as 
\begin{align}
T &= u \theta,
\end{align}
where $u$ is unitary and $\theta$ is a conjugation.  For a
conjugation, there exists a unique orthonormal basis $\{e_j\}$ of
$\mc H$ such that
\begin{align}
\theta e_j &= e_j.
\label{conj_basis}
\end{align}
It follows that
\begin{align}
\theta^\dagger &= \theta^{-1} = \theta,
&
T^\dagger &= T^{-1} = \theta u^\dagger.
\end{align}
\end{enumerate}
\end{proposition}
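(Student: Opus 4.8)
The plan is to verify the three listed properties in turn, working only from the defining identity $\avg{T\phi,T\psi}=\avg{\psi,\phi}$, the definition of the antilinear adjoint $\avg{\phi,X\psi}=\avg{\psi,X^\dagger\phi}$, and the standing assumption that $\mc H$ is finite-dimensional. For property~1, I would first establish $T^\dagger T=I$: putting $\phi'=T\phi$ in $\avg{\phi',T\psi}=\avg{\psi,T^\dagger\phi'}$ and comparing with the antiunitarity identity yields $\avg{\psi,T^\dagger T\phi}=\avg{\psi,\phi}$ for all $\psi,\phi$, hence $T^\dagger T=I$; in finite dimensions this already forces $T$ to be invertible, so $T^\dagger=T^{-1}$. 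That $T^{-1}$ is again antiunitary then follows by applying the defining identity to the vectors $T^{-1}\phi$ and $T^{-1}\psi$ and taking complex conjugates of both sides, which restores the original order of the two arguments.

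Property~2 is immediate, since $\avg{Te_j,Te_k}=\avg{e_k,e_j}=\delta_{jk}$. For the decomposition in property~3, I would fix once and for all one conjugation $\theta_0$ on $\mc H$---for concreteness, coordinatewise complex conjugation with respect to some orthonormal basis, which visibly satisfies $\theta_0^2=I$ and $\avg{\theta_0 a,\theta_0 b}=\avg{b,a}$---and set $u:=T\theta_0$. This $u$ is linear, being a composite of two antilinear maps, and $\avg{ua,ub}=\avg{T\theta_0 a,T\theta_0 b}=\avg{\theta_0 b,\theta_0 a}=\avg{a,b}$, so $u$ is unitary and $T=u\theta_0$ is a decomposition of the claimed type.

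It remains, given a conjugation $\theta$, to exhibit an orthonormal basis of vectors fixed by $\theta$; for this I would pass to the associated real structure. Let $V:=\{v\in\mc H:\theta v=v\}$, a real-linear subspace, and note that $\theta(iw)=-i\theta w$, so the splitting $v=\tfrac12(v+\theta v)+\tfrac12(v-\theta v)$ writes $\mc H=V\oplus iV$ as a real vector space---the sum being direct because any $v\in V\cap iV$ would satisfy both $\theta v=v$ and $\theta v=-v$. Hence $\dim_{\mb R}V=\dim_{\mb C}\mc H$, and for $v,v'\in V$ antiunitarity gives $\avg{v,v'}=\avg{\theta v',\theta v}=\avg{v',v}$, so the inner product restricts to a genuine real inner product on $V$; any orthonormal basis $\{e_j\}$ of $(V,\avg{\cdot,\cdot})$ is then an orthonormal basis of $\mc H$ with $\theta e_j=e_j$ (such a basis being pinned down only up to a real orthogonal rotation of $V$, so the content here is existence). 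The closing identities are routine: $\theta^\dagger=\theta^{-1}$ by property~1 and $\theta^{-1}=\theta$ since $\theta^2=I$, whence $T^\dagger=T^{-1}=(u\theta)^{-1}=\theta^{-1}u^{-1}=\theta u^\dagger$.

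The only step with any real content is the real-structure decomposition $\mc H=V\oplus iV$ carrying a real inner product on $V$, which is what makes the fixed orthonormal basis available; everything else reduces to direct substitution into the defining identities, with finite-dimensionality invoked only to upgrade $T^\dagger T=I$ to invertibility of $T$.
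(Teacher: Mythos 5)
Your argument is correct and complete, but note that the paper itself supplies no proof of Proposition~\ref{prop_antiu}: it is stated as a collection of standard facts with a citation to the literature. Your write-up therefore fills a gap rather than duplicates an argument, and every step checks out. The derivation of $T^\dagger T = I$ from the two defining identities, the use of finite-dimensionality to upgrade this to $T^\dagger = T^{-1}$, and the verification that $T^{-1}$ is antiunitary are all sound; so is the decomposition $T = (T\theta_0)\theta_0$ with a fixed reference conjugation $\theta_0$, which is simpler than constructions that build $\theta$ out of $T$ itself and is all the proposition asks for. The real-structure argument $\mc H = V \oplus iV$ with the restricted inner product real and symmetric on $V$ is the standard and correct route to a $\theta$-fixed orthonormal basis. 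One point worth making explicit: as you observe parenthetically, such a basis is \emph{not} unique --- any real orthogonal rotation within $V$ yields another one --- so the word ``unique'' in the proposition as stated is an overstatement; only existence holds (and only existence is used anywhere in the paper). Your proof establishes exactly the true statement and correctly flags the discrepancy.
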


The inverse $\mc A^{-1}$ and the Hilbert-Schmidt adjoint $\mc A^\HS$
of an antilinear map $\mc A$ are defined in the same manner as those
of an antilinear operator.  I collect some basic facts about unitary
and antiunitary maps in the following proposition.
\begin{proposition}
\label{prop_maps}
Define the $\mc J$, $\Theta$, and $\mc U$ maps on $\mc O(\mc H)$ as
\begin{align}
\mc J X &\equiv X^\dagger,
&
\Theta X &\equiv \theta X \theta,
&
\mc U X &\equiv U X U^\dagger,
\label{def_maps}
\end{align}
where $U$ is an arbitrary unitary operator on $\mc H$ and $\theta$ is
an arbitrary conjugation on $\mc H$.
\begin{enumerate}
\item Commutation with $\mc J$:
\begin{align}
\mc J\Theta &= \Theta \mc J,
&
\mc J \mc U &= \mc U \mc J.
\end{align}
\item Chain rules: for any $X,Y \in \mc O(\mc H)$,
\begin{align}
\mc J(XY) &= (\mc J Y)(\mc JX),
&
\Theta(XY) &= (\Theta X) (\Theta Y),
&
\mc U(XY) &= (\mc U X)(\mc U Y).
\end{align}
\item In terms of the Hilbert-Schmidt inner product, $\mc U$ is
  unitary, while $\mc J$ and $\Theta$ are antiunitary, viz.,
  for any $X,Y \in \mc O(\mc H)$,
\begin{align}
\avg{\mc J X,\mc J Y} &= \avg{Y,X},
&
\avg{\Theta X,\Theta Y} &= \avg{Y,X},
&
\avg{\mc UX,\mc U Y} &= \avg{X,Y}.
\end{align}
It follows that
\begin{align}
\mc J^\HS &= \mc J^{-1} = \mc J,
&
\Theta^\HS &= \Theta^{-1} = \Theta,
&
\mc U^\HS &= \mc U^{-1}.
\end{align}
\item A map in product form
\begin{align}
\mc M_{A,B} X &\equiv A X B
\end{align}
obeys
\begin{align}
\mc M_{A,B}^\HS &= \mc M_{\mc JA,\mc J B},
&
\Theta \mc M_{A,B}   \Theta^{-1} &= \mc M_{\Theta A,\Theta B},
&
\mc U \mc M_{A,B}   \mc U^{-1} &= \mc M_{\mc U A,\mc U B}.
\end{align}
\item Let $\{e_j\}$ be the eigenbasis of $\theta$ satisfying
  Eq.~(\ref{conj_basis}) and let the matrix representation of
  $X \in \mc O(\mc H)$ in this basis be
\begin{align}
X_{jk} &\equiv \Avg{e_j,X e_k}.
\end{align}
Then
\begin{align}
\Avg{e_j,\mc J X e_k} &= X_{kj}^*,
&
\Avg{e_j,\Theta X e_k} &= X_{jk}^*,
&
\Avg{e_j,\Theta \mc J X e_k} &= X_{kj},
\end{align}
where $*$ denotes the entry-wise complex conjugate. 

\item Let $X$ be a normal operator and $s \in \mathbb C$.  Then
\begin{align}
\mc J(X^s) &= (\mc J X)^{s^*},
&
\Theta(X^s)&= (\Theta X)^{s^*},
&
\mc U(X^s) &= (\mc U X)^s.
\end{align}

\end{enumerate}
\end{proposition}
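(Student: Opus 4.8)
The plan is to verify each of the six items by direct computation from the definitions in Eq.~(\ref{def_maps}), drawing on three elementary tools: the reversed-product rule $(AB)^\dagger = B^\dagger A^\dagger$ for adjoints of (anti)linear operators; the defining properties of the conjugation $\theta$ from Prop.~\ref{prop_antiu}, namely $\theta^\dagger = \theta^{-1} = \theta$ together with the eigenbasis $\{e_j\}$ of Eq.~(\ref{conj_basis}); and the cyclicity of the trace underlying the Hilbert-Schmidt inner product of Def.~\ref{def_HS}. Items~1 and 2 serve as the warm-up. The chain rules of item~2 follow by inserting $\theta^2 = I$ and $U^\dagger U = I$: $\Theta(XY) = \theta X(\theta\theta)Y\theta = (\Theta X)(\Theta Y)$ and $\mc U(XY) = UX(U^\dagger U)YU^\dagger = (\mc U X)(\mc U Y)$, while $\mc J(XY) = (XY)^\dagger = (\mc J Y)(\mc J X)$ is the reversed-product rule itself. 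The commutation relations of item~1 then drop out of the same adjoint rule: $\mc J\Theta X = (\theta X\theta)^\dagger = \theta X^\dagger\theta = \Theta\mc J X$ and $\mc J\mc U X = (UXU^\dagger)^\dagger = UX^\dagger U^\dagger = \mc U\mc J X$.

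For item~3 I would compute the three Hilbert-Schmidt inner products in turn. The $\mc U$ and $\mc J$ cases are immediate from cyclicity: $\avg{\mc UX,\mc UY} = \trace(UX^\dagger U^\dagger U Y U^\dagger) = \trace(X^\dagger Y) = \avg{X,Y}$, and $\avg{\mc JX,\mc JY} = \trace(X Y^\dagger) = \trace(Y^\dagger X) = \avg{Y,X}$, exhibiting $\mc U$ as unitary and $\mc J$ as antiunitary. The delicate case is $\Theta$, which reduces via $(\theta X\theta)^\dagger = \theta X^\dagger\theta$ to $\avg{\Theta X,\Theta Y} = \trace(\theta X^\dagger Y\theta)$; here I would establish the auxiliary identity $\trace(\theta Z\theta) = (\trace Z)^*$ by expanding the trace in the eigenbasis $\{e_j\}$ of $\theta$, using $\theta e_j = e_j$ and the antiunitarity relation $\avg{\theta a,\theta b} = \avg{b,a}$ to get $\avg{e_j,\theta Z e_j} = \avg{Z e_j,e_j} = \avg{e_j,Z e_j}^*$. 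This yields $\avg{\Theta X,\Theta Y} = (\trace(X^\dagger Y))^* = \avg{Y,X}$, so $\Theta$ is antiunitary. The adjoint identities in the second line of item~3 then follow from the involution properties $\mc J^2 = \Theta^2 = \mc I$ (immediate from the definitions and $\theta^2 = I$) combined with the general fact $T^\HS = T^{-1}$ for (anti)unitary maps, Prop.~\ref{prop_antiu}(1) applied at the map level.

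Items~4, 5, and 6 are again direct. For item~4, the Hilbert-Schmidt adjoint of $\mc M_{A,B}$ follows from $\avg{\mc M_{A,B}X,Y} = \trace(B^\dagger X^\dagger A^\dagger Y) = \trace(X^\dagger A^\dagger Y B^\dagger)$ by cyclicity, giving $\mc M_{A,B}^\HS = \mc M_{A^\dagger,B^\dagger} = \mc M_{\mc J A,\mc J B}$; the conjugation identities $\Theta\mc M_{A,B}\Theta^{-1} X = \theta A\theta X\theta B\theta = \mc M_{\Theta A,\Theta B}X$ and its $\mc U$ analogue come from inserting $\theta^2 = I$ or $U^\dagger U = I$ as in item~2. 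For item~5, I would read off the matrix elements directly in the $\theta$-eigenbasis: $\avg{e_j,\mc J X e_k} = \avg{X e_j,e_k} = X_{kj}^*$, then $\avg{e_j,\Theta X e_k} = \avg{e_j,\theta X e_k} = \avg{X e_k,e_j} = X_{jk}^*$ (using $\theta e_k = e_k$ and antiunitarity), and finally $\avg{e_j,\Theta\mc J X e_k} = X_{kj}$ by composing the first two. Item~6 is handled by the spectral theorem: writing a normal $X = \sum_\lambda \lambda P_\lambda$, I would check that $\mc J$, $\Theta$, and $\mc U$ each send the spectral family $\{P_\lambda\}$ to another orthogonal family summing to $\mc I$ (exactly items~1 and 2 applied to $P_\lambda^\dagger = P_\lambda$, $P_\lambda^2 = P_\lambda$, and $P_\lambda P_{\lambda'} = 0$ for $\lambda\neq\lambda'$), while conjugating the eigenvalue $\lambda\mapsto\lambda^*$ under $\mc J$ and $\Theta$ but fixing it under $\mc U$; since $(\lambda^s)^* = (\lambda^*)^{s^*}$, matching the two spectral expansions gives $\mc J(X^s) = (\mc J X)^{s^*}$, $\Theta(X^s) = (\Theta X)^{s^*}$, and $\mc U(X^s) = (\mc U X)^s$.

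No single step is conceptually deep; the main obstacle is purely the bookkeeping of the complex conjugations introduced by the antilinearity of $\theta$ and $\mc J$. The two places demanding the most care are the trace identity $\trace(\theta Z\theta) = (\trace Z)^*$ powering the antiunitarity of $\Theta$ in item~3, and the eigenvalue-exponent conjugation $\lambda^s\mapsto(\lambda^*)^{s^*}$ in item~6, where one must first confirm that the transformed spectral projections still constitute a legitimate orthogonal decomposition of $\mc I$ before invoking the functional calculus.
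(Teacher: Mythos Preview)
The paper does not actually supply a proof of Prop.~\ref{prop_maps}; it merely introduces the proposition as ``some basic facts about unitary and antiunitary maps'' and leaves the verification to the reader. Your proposal therefore fills a gap rather than competing with an existing argument, and the direct computations you outline for each of the six items are correct. The only point worth a minor caution is item~6: the identity $(\lambda^s)^* = (\lambda^*)^{s^*}$ presupposes a consistent choice of branch for the complex power, which is implicit but should be noted if you write this out in full.
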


\section{\label{app_braket}Partial ket and bra operators}
Consider two Hilbert spaces $\mc H_x$ and $\mc H_y$ and their tensor
product $\mc H_x\otimes\mc H_y$. I continue to use the proper notation
for the inner product and use the bra and ket notations only to denote
the operators defined in the following. A partial ket operator
$\ket{\xi}_y:\mc H_x\to \mc H_x\otimes \mc H_y$, where
$\xi \in \mc H_y$, is defined by
\begin{align}
\ket{\xi}_y \psi &\equiv \psi \otimes \xi
\quad
\forall \psi \in \mc H_x.
\end{align}
Its adjoint, denoted as $\bra{\xi}_y:\mc H_x\otimes \mc H_y \to \mc H_x$,
is defined by
\begin{align}
\avg{\phi,\ket{\xi}_y \psi}
&= \avg{\bra{\xi}_y \phi, \psi}
\quad
\forall \psi \in \mc H_x, \phi \in \mc H_x\otimes\mc H_y.
\end{align}
I collect some basic facts about the partial ket and bra operators
in the following proposition.
\begin{proposition}
\label{prop_braket}
Let $\xi$ and $\eta$ be arbitrary elements of $\mc H_y$.
\begin{enumerate}
\item  $\bra{\eta}_y \ket{\xi}_y = \avg{\eta,\xi} I_x$.
\item Suppose that $\avg{\xi,\xi} = 1$.  The operator
  $\ket{\xi}_y \bra{\xi}_y:\mc H_x\otimes \mc H_y \to \mc H_x\otimes
  \mc H_y$ is a projection of $\mc H_x\otimes\mc H_y$ onto the
  subspace
\begin{align}
\ket{\xi}_y \mc H_x \equiv \BK{\psi\otimes \xi:\psi \in \mc H_x}.
\end{align}
\item If $\{\xi_j\}$ is an overcomplete system of $\mc H_y$ such that
  $\sum_j \ket{\xi_j}\bra{\xi_j} = I_y$, viz.,
\begin{align}
\psi &= \sum_j \xi_j\avg{\xi_j,\psi}
\quad
\forall \psi \in \mc H_y,
\label{Iy}
\end{align}
then
\begin{align}
\sum_j \ket{\xi_j}_y \bra{\xi_j}_y &= I_x\otimes I_y.
\end{align}
\item For any $X \in \mc O(\mc H_x)$,
\begin{align}
\ket{\xi}_y X &= (X\otimes I_y) \ket{\xi}_y,
&
X \bra{\xi}_y &= \bra{\xi}_y (X\otimes I_y).
\end{align}
\item For any $X \in \mc O(\mc H_x\otimes\mc H_y)$ and any
  overcomplete system $\{\xi_j\}$ of $\mc H_y$,
\begin{align}
\trace_y X = \sum_j \bra{\xi_j}_y X \ket{\xi_j}_y.
\end{align}
\end{enumerate}
\end{proposition}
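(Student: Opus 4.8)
The plan is to verify the five items in order, each time reducing to the defining relations of the partial ket and bra operators and to the antilinearity of the inner product in its first argument. For item~1, I would evaluate $\bra{\eta}_y\ket{\xi}_y\psi$ for arbitrary $\psi\in\mc H_x$ by pairing it against an arbitrary $\psi'\in\mc H_x$: the adjoint definition gives $\avg{\psi',\bra{\eta}_y\ket{\xi}_y\psi}=\avg{\ket{\eta}_y\psi',\ket{\xi}_y\psi}=\avg{\psi'\otimes\eta,\psi\otimes\xi}=\avg{\psi',\psi}\avg{\eta,\xi}$, hence $\bra{\eta}_y\ket{\xi}_y\psi=\avg{\eta,\xi}\psi$, which is item~1. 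Item~4 is just as immediate: for $\psi\in\mc H_x$, $\ket{\xi}_y X\psi=(X\psi)\otimes\xi=(X\otimes I_y)(\psi\otimes\xi)=(X\otimes I_y)\ket{\xi}_y\psi$, and the second identity follows by taking the Hilbert-space adjoint of $\ket{\xi}_y X=(X\otimes I_y)\ket{\xi}_y$, using $\ket{\xi}_y^\dagger=\bra{\xi}_y$ and $(X\otimes I_y)^\dagger=X^\dagger\otimes I_y$, and then replacing $X^\dagger$ by $X$.

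For item~2 I would combine $\avg{\xi,\xi}=1$ with item~1 to get $\bra{\xi}_y\ket{\xi}_y=I_x$; then $(\ket{\xi}_y\bra{\xi}_y)^2=\ket{\xi}_y(\bra{\xi}_y\ket{\xi}_y)\bra{\xi}_y=\ket{\xi}_y\bra{\xi}_y$ gives idempotence, $(\ket{\xi}_y\bra{\xi}_y)^\dagger=\ket{\xi}_y\bra{\xi}_y$ follows from the adjoint relations between $\ket{\xi}_y$ and $\bra{\xi}_y$, so it is an orthogonal projection; its range lies in $\ket{\xi}_y\mc H_x$ by construction and exhausts it because $\ket{\xi}_y\bra{\xi}_y(\psi\otimes\xi)=\psi\otimes\xi$. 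For item~3 I would act on a product vector: by item~1, $\sum_j\ket{\xi_j}_y\bra{\xi_j}_y(\psi\otimes\eta)=\sum_j\avg{\xi_j,\eta}(\psi\otimes\xi_j)=\psi\otimes\sum_j\xi_j\avg{\xi_j,\eta}=\psi\otimes\eta$ by Eq.~(\ref{Iy}); since product vectors span $\mc H_x\otimes\mc H_y$ and all the maps involved are linear, this extends to the whole space.

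Finally, for item~5 I would anchor the claim on an orthonormal basis $\{e_k\}$ of $\mc H_y$, for which $\sum_k\bra{e_k}_y X\ket{e_k}_y=\trace_y X$ is the standard coordinate formula for the partial trace, and then pass to a general overcomplete system. Writing $\ket{\xi_j}_y=\sum_k\avg{e_k,\xi_j}\ket{e_k}_y$ (which follows from $\xi_j=\sum_k e_k\avg{e_k,\xi_j}$ exactly as in the proof of item~4) and correspondingly $\bra{\xi_j}_y=\sum_l\avg{\xi_j,e_l}\bra{e_l}_y$, the sum $\sum_j\bra{\xi_j}_y X\ket{\xi_j}_y$ becomes $\sum_{k,l}\big(\sum_j\avg{e_k,\xi_j}\avg{\xi_j,e_l}\big)\bra{e_l}_y X\ket{e_k}_y$, and the bracketed coefficient equals $\avg{e_k,e_l}=\delta_{kl}$ by $\sum_j\ket{\xi_j}\bra{\xi_j}=I_y$, collapsing the double sum to $\trace_y X$. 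Alternatively one can pair $\sum_j\bra{\xi_j}_y X\ket{\xi_j}_y$ with an arbitrary $Z\in\mc O(\mc H_x)$, move $Z$ through the partial bra using item~4, and invoke cyclicity of the trace together with item~3 to rewrite the result as $\trace_{xy}\big((Z\otimes I_y)X\big)=\trace_x\big(Z\,\trace_y X\big)$, thereby identifying it with $\trace_y X$. I do not expect a genuine obstacle: the only things to watch are which argument of the inner product is antilinear, the correct adjoint relation between the partial ket and bra, and, in item~5, invoking the standard characterization of the partial trace to ground the orthonormal-basis case before reducing the overcomplete-system case to it.
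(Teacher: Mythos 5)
Your proposal is correct and follows essentially the same route as the paper: every item is verified by direct computation from the defining relations of the partial ket/bra operators, with item 2 via self-adjointness, idempotence, and a two-sided range inclusion, and item 3 via the resolution of identity $\sum_j\ket{\xi_j}\bra{\xi_j}=I_y$. The only cosmetic difference is in item 5, where your primary argument changes frames to an orthonormal basis while the paper tests $\sum_j\bra{\xi_j}_y X\ket{\xi_j}_y$ against an arbitrary $Y\in\mc O(\mc H_x)$ under the trace---which is precisely the alternative you already sketch---so both verifications go through.
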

\begin{proof}
  Let $\psi,\phi$ be arbitrary elements of $\mc H_x$ and let
  $\{a_j\}$, $\{b_k\}$, and $\{a_j\otimes b_k\}$ be orthonormal bases
  of $\mc H_x$, $\mc H_y$, and $\mc H_x\otimes \mc H_y$, respectively.
\begin{enumerate}
\item 
\begin{align}
\avg{\phi,\bra{\eta}_y\ket{\xi}_y\psi} &= \avg{\phi\otimes\eta,\psi\otimes\xi}
= \avg{\phi,\psi}\avg{\eta,\xi}.
\end{align}
\item $\ket{\xi}_y\bra{\xi}_y$ is obviously self-adjoint. It is also
  idempotent because
  $\ket{\xi}_y\bra{\xi}_y \ket{\xi}_y\bra{\xi}_y = \ket{\xi}_y
  \bra{\xi}_y$.  It follows that $\ket{\xi}_y\bra{\xi}_y$ is a
  projection operator.  Now let
  $\mc H_\xi \subseteq \mc H_x\otimes\mc H_y$ be the range of
  $\ket{\xi}_y\bra{\xi}_y$. Let $\Psi$ be an arbitrary element of
  $\mc H_x\otimes \mc H_y$ and observe that
  $\ket{\xi}_y \bra{\xi}_y \Psi \in \ket{\xi}_y \mc H_x$, which
  implies $\mc H_\xi \subseteq \ket{\xi}_y \mc H_x$. Conversely, write
  an arbitrary element of $\ket{\xi}_y \mc H_x$ as $\ket{\xi}_y
  \psi$. Then
\begin{align}
\ket{\xi}_y\bra{\xi}_y \ket{\xi}_y \psi
  &= \ket{\xi}_y\avg{\xi,\xi}\psi = \ket{\xi}_y \psi,
\end{align}
implying that $\ket{\xi}_y \mc H_x\subseteq \mc H_\xi$.  Hence
$\mc H_\xi = \ket{\xi}_y \mc H_x$.

\item Write $\psi = \sum_{k,l} \psi_{kl} (a_k\otimes b_l)$ and
  $\phi = \sum_{m,n} \phi_{mn} (a_m \otimes b_n)$.  Then
\begin{align}
\sum_j \avg{\phi,\ket{\xi_j}_y\bra{\xi_j}_y \psi}
&= \sum_{j,m,n,k,l} \phi_{mn}^*\psi_{kl}
\avg{a_m\otimes b_n,\ket{\xi_j}_y\bra{\xi_j}_y a_k\otimes b_l}
\\
&= \sum_{j,m,n,k,l} \phi_{mn}^*\psi_{kl}\avg{b_n,\xi_j}\avg{\xi_j,b_l} \avg{a_m,a_k}
\\
&= \sum_{m,n,k,l} \phi_{mn}^*\psi_{kl}\avg{b_n,b_l} \avg{a_m,a_k}
\\
&= \sum_{m,n} \phi_{mn}^*\psi_{mn} = \avg{\phi,\psi}.
\end{align}
\item 
\begin{align}
(X\otimes I_y) \ket{\xi}_y  \psi &= (X\otimes I_y) (\psi \otimes \xi)
= (X\psi \otimes \xi) = \ket{\xi}_y X \psi.
\end{align}
\item Let $Y$ be an arbitrary element of  $\mc O(\mc H_x)$.
\begin{align}
\sum_j \trace\bk{ \bra{\xi_j}_y X \ket{\xi_j}_y Y}
&= \sum_{j,k} \avg{a_k,\bra{\xi_j}_y X \ket{\xi_j}_y Y a_k} 
\\
&= \sum_{j,k} \avg{a_k \otimes \xi_j,X (Ya_k \otimes \xi_j)}
\\
&= \sum_{j,k} \avg{a_k \otimes \xi_j,X (Y\otimes I_y) (a_k \otimes \xi_j)}
\\
&= \trace\Bk{X(Y\otimes I_y)} = \trace\Bk{(\trace_y X) Y}.
\end{align}
\end{enumerate}
\end{proof}

\bibliographystyle{quantum}
\bibliography{research3}

\end{document}